\newcommand{\lyxkeywords}[1]{\hypersetup{pdfkeywords={#1}}\keywords{#1}}
\newcommand{\lyxdoi}[1]{\newcommand\@doi{#1}}
\newcommand{\lyxpublication}[1]{\newcommand\@publication{#1}}
\newcommand{\lyxissue}[1]{\newcommand\@issue{#1}}
\newcounter{hours}
\newcounter{minutes}
\newcommand\twodigits[1]{\ifnum#1<10 0\number#1 \else #1\fi}
  \def \ps@plain {%
    \let \@mkboth = \@gobbletwo
    \let \@evenhead = \@empty
    \def \@evenfoot {\scriptsize
                   \rlap{\textit{\@preprintfooter}}\hfil
                   \thepage \hfil
                   \llap{\textit{1\the\year%
                         \the\month.\twodigits{\the\day}%
                         \thehours\twodigits{\theminutes}}}}%
    \let \@oddhead = \@empty
    \let \@oddfoot = \@evenfoot}
\theoremstyle{plain}
\newtheorem{thm}{\protect\theoremname}
  \theoremstyle{plain}
  \newtheorem{lem}[thm]{\protect\lemmaname}
  \theoremstyle{definition}
  \newtheorem{defn}[thm]{\protect\definitionname}
 \newenvironment{lyxcode}
         {\par\begin{list}{}{
                 \setlength{\listparindent}{0pt}
                 \raggedright
                 \setlength{\itemsep}{0pt}
                 \setlength{\parsep}{0pt}
                 \normalfont\ttfamily}%
          \item[]}
         {\end{list}}
\newcommand{\code}[1]{\texttt{#1}}
\pgfplotsset{compat=1.3}
\def \ps@plain {%
  \let \@mkboth = \@gobbletwo
  \let \@evenhead = \@empty
  \def \@evenfoot {\scriptsize\hfil\thepage\hfil}%
  \let \@oddhead = \@empty
  \let \@oddfoot = \@evenfoot}
  \providecommand{\definitionname}{Definition}
  \providecommand{\lemmaname}{Lemma}
\providecommand{\theoremname}{Theorem}
\begin{document}

\global\long\def\proofsep{\mathbin{\vdash}}

\global\long\def\fCenter{\proofsep}

\global\long\def\binaryprimitive#1#2{\BinaryInf$#1\fCenter#2$}

\global\long\def\axiomprimitive#1#2{\Axiom$#1\fCenter#2$}

\global\long\def\unaryprimitive#1#2{\UnaryInf$#1\fCenter#2$}

\global\long\def\trinaryprimitive#1#2{\TrinaryInf$#1\fCenter#2$}

\global\long\def\bussproof#1{#1\DisplayProof}

\global\long\def\binaryinfc#1#2#3#4{#1#2\RightLabel{\ensuremath{{\scriptstyle \textrm{#4}}}}\BinaryInfC{\ensuremath{#3}}}

\global\long\def\trinaryinfc#1#2#3#4#5{#1#2#3\RightLabel{\ensuremath{{\scriptstyle \textrm{#5}}}}\TrinaryInfC{\ensuremath{#4}}}

\global\long\def\unaryinfc#1#2#3{#1\RightLabel{\ensuremath{{\scriptstyle \textrm{#3}}}}\UnaryInfC{\ensuremath{#2}}}

\global\long\def\axiomc#1{\AxiomC{\ensuremath{#1}}}

\global\long\def\binaryinf#1#2#3#4#5{#1#2\RightLabel{\ensuremath{{\scriptstyle \textrm{#5}}}}\binaryprimitive{#3}{#4}}

\global\long\def\trinaryinf#1#2#3#4#5#6{#1#2#3\RightLabel{\ensuremath{{\scriptstyle \textrm{#6}}}}\trinaryprimitive{#4}{#5}}

\global\long\def\unaryinf#1#2#3#4{#1\RightLabel{\ensuremath{{\scriptstyle \textrm{#4}}}}\unaryprimitive{#2}{#3}}

\global\long\def\axiom#1#2{\axiomprimitive{#1}{#2}}

\global\long\def\axrulesp#1#2#3#4{\bussproof{\unaryinf{\axiomc{\vphantom{#4}}}{#1}{#2}{#3}}}

\global\long\def\axrule#1#2#3{\axrulesp{#1}{#2}{#3}{\proofsep\Gamma,\Delta}}

\global\long\def\unrule#1#2#3#4#5{\bussproof{\unaryinf{\axiom{#1}{#2}}{#3}{#4}{#5}}}

\global\long\def\unrulec#1#2#3{\bussproof{\unaryinfc{\axiomc{#1}}{#2}{#3}}}

\global\long\def\binrule#1#2#3#4#5#6#7{\bussproof{\binaryinf{\axiom{#1}{#2}}{\axiom{#3}{#4}}{#5}{#6}{#7}}}

\titlebanner{PREPRINT}
\preprintfooter{PREPRINT}

\let\funarrowX=\rightarrow

\global\long\def\funarrow{\funarrowX}

\renewcommand\rightarrow{\ensuremath{\lhook\joinrel\funarrowX}}

\publicationrights{licensed}
\conferenceinfo{PLDI '16, Proceedings of the 37th ACM SIGPLAN Conference on Programming Language Design and Implementation}{June 13 -- 17, 2016, Santa Barbara, CA, USA}
\copyrightyear{2016}
\copyrightdata{978-1-4503-4261-2/16/06}
\reprintprice{\$15.00}
\copyrightdoi{2908080.2908128}

\title{}

\title{On the Complexity and Performance of Parsing with Derivatives}

\authorinfo{Michael D. Adams}{University of Utah, USA}{}

\authorinfo{Celeste Hollenbeck}{University of Utah, USA}{}

\authorinfo{Matthew Might}{University of Utah, USA}{}
\maketitle
\begin{abstract}
Current algorithms for context-free parsing inflict a trade-off between
ease of understanding, ease of implementation, theoretical complexity,
and practical performance. No algorithm achieves all of these properties
simultaneously.

\citet{Might:2011:PDF:2034773.2034801} introduced parsing with derivatives,
which handles arbitrary context-free grammars while being both easy
to understand and simple to implement. Despite much initial enthusiasm
and a multitude of independent implementations, its worst-case complexity
has never been proven to be better than exponential. In fact, high-level
arguments claiming it is fundamentally exponential have been advanced
and even accepted as part of the folklore. Performance ended up being
sluggish in practice, and this sluggishness was taken as informal
evidence of exponentiality.

In this paper, we reexamine the performance of parsing with derivatives.
We have discovered that it is not exponential but, in fact, cubic.
Moreover, simple (though perhaps not obvious) modifications to the
implementation by \citet{Might:2011:PDF:2034773.2034801} lead to
an implementation that is not only easy to understand but also highly
performant in practice.

\marginpar{} 
\end{abstract}
\category{D.3.4}{Programming Languages}{Processors}[Parsing]

\lyxkeywords{Parsing; Parsing with derivatives; Performance}

\section{Introduction}

Although many programmers have some familiarity with parsing, few
understand the intricacies of how parsing actually works. Rather than
hand-write a parser, many choose to use an existing parsing tool.
However, these tools are known for their maintenance and extension
challenges, vague error descriptions, and frustrating shift/reduce
and reduce/reduce conflicts \citep{parsingNon-LRwYacc}. 

In a bid to improve accessibility, \citet{Might:2011:PDF:2034773.2034801}
present a simple technique for parsing called parsing with derivatives
(PWD). Their parser extends the \citeauthor{bzd} derivative of regular
expressions \citep{bzd} to support context-free grammars (CFGs).
It transparently handles language ambiguity and recursion and is easy
to implement and understand.

PWD has been implemented in a number of languages \citep{McGuire2012,Vognsen2012,Mull2013,Shearar2013,Byrd2013,Engelberg2015,Pfiel2015}.
However, these tend to perform poorly, and many conjectured that the
algorithm is fundamentally exponential \citep{Cox:2010:Online,spiewak-email}
and could not be implemented efficiently. In fact, \citet{Might:2011:PDF:2034773.2034801}
report that their implementation took two seconds to parse only 31
lines of Python.

In this paper, we revisit the complexity and performance of PWD. It
turns out that the run time of PWD is linearly bounded by the number
of grammar nodes constructed during parsing, and we can strategically
assign \emph{unique names} to these nodes in such a way that the number
of possible names is \emph{cubic}. This means that the run time of
of PWD is, in fact, cubic, and the assumed exponential complexity
was illusory.

Investigating further, we revisit the implementation of PWD by \citet{Might:2011:PDF:2034773.2034801}
by building and carefully profiling a new implementation to determine
bottlenecks adversely affecting performance. We make three significant
improvements over the original algorithm: \emph{accelerated fixed
points}, \emph{improved compaction}, and \emph{more efficient memoization}.
Once these are fixed, PWD's performance improves to match that of
other general CFG parsers.

This paper makes the following contributions:
\begin{itemize}
\item \prettyref{sec:Background} reviews the work by \citet{Might:2011:PDF:2034773.2034801}
on PWD and its key ideas.
\item \prettyref{sec:Complexity-Analysis} investigates PWD's complexity
and shows that its upper bound is cubic instead of the exponential
that was previously believed. This makes PWD's asymptotic behavior
on par with that of other general CFG parsers.
\item \prettyref{sec:Practical-Performance} examines PWD's performance
and shows that targeted algorithmic improvements can achieve a speedup
of almost 1000 times over the implementation in \citet{Might:2011:PDF:2034773.2034801}.\newpage{}
\end{itemize}

\section{Background\label{sec:Background}}

\citet{bzd} presents and \citet{Owens:2009:RDR:1520284.1520288}
expand upon derivatives of regular expressions as a means to recognize
strings that match a given regular expression.

With PWD, \citet{Might:2011:PDF:2034773.2034801} extend the concept
of string recognition via \citeauthor{bzd} derivatives to CFGs. The
essential trick to this is handling recursively defined languages.
Computing the derivative of a non-terminal may require the derivative
of that same non-terminal again, causing an infinite loop. \citet{Might:2011:PDF:2034773.2034801}
circumvent this using a combination of memoization, laziness, and
fixed points. We briefly review their technique in this section.

\subsection{The \citeauthor{bzd} Derivative\label{sub:The-Brzozowski-derivative}}

\citet{bzd} matches regular expressions against an input by successively
matching each character of the input against the set of words in the
semantics of that regular expression. In three steps, he computes
the set of words (if any) that can validly appear after the initial
input character. First, he takes the first character of the input
and compares it to the first characters of the words in the semantics.
Second, he keeps only the words whose first characters match and discards
all others. Finally, he removes the first character from the remaining
words. 

\citet{bzd} calls this the derivative of a language and formally
defines it as the following, where $c$ is the input character and
$\llbracket L\rrbracket$ is the set of words in the language $L$:\textbf{
\[
D_{c}\left(L\right)=\left\{ w\mid cw\in\llbracket L\rrbracket\right\} 
\]
}For example, with respect to the character \code{f}, the derivative
of the language for which $\llbracket L\rrbracket=\{\code{foo},\code{frak},\code{bar}\}$
is $D_{\code{f}}\left(L\right)=\{\code{oo},\code{rak}\}$. Because
\code{foo} and \code{frak} start with the character \code{f} and
\code{bar} does not, we keep only \code{foo} and \code{frak} and
then remove their initial characters, leaving \code{oo} and \code{rak}.

We repeat this process with each character in the input until it is
exhausted. If, after every derivative has been performed, the resulting
set of words contains the empty word, $\epsilon$, then there is some
word in the original language consisting of exactly the input characters,
and the language accepts the input. All f this processing takes place
at parse time, so there is no parser-generation phase.

\subsection{Parsing Expressions\label{sub:Parsing-expressions}}

\begin{figure}[tb]
\textbf{Forms}
\begin{alignat*}{1}
L & ::=\emptyset\mid\epsilon_{s}\mid c\mid L_{1}\circ L_{2}\mid L_{1}\cup L_{2}\mid L\rightarrow f\\
s,t & \in T\mbox{\qquad\qquad Abstract syntax trees}\\
f & \in T\funarrow T\mbox{\,\,\,\,\,\,\,\,\,\,Reduction functions}
\end{alignat*}

\textbf{Semantics}
\[
\begin{alignedat}{2}\llbracket L\rrbracket & \in\wp\left(\Sigma^{*}\times T\right)\\
\llbracket\emptyset\rrbracket & =\left\{ \right\}  &  & \mbox{Empty Lang.}\\
\llbracket\epsilon_{s}\rrbracket & =\left\{ \left(\epsilon,s\right)\right\}  &  & \mbox{Empty Word}\\
\llbracket c\rrbracket & =\left\{ \left(c,c\right)\right\}  &  & \mbox{Token}\\
\llbracket L_{1}\circ L_{2}\rrbracket & =\{\left(uv,\left(s,t\right)\right)\mid\left(u,s\right)\in\llbracket L_{1}\rrbracket &  & \mbox{Concatenation}\\
 & \hphantom{{}=\{\left(uv,\left(s,t\right)\right)\mid{}}\llap{\mbox{and }}\left(v,t\right)\in\llbracket L_{2}\rrbracket\}\\
\llbracket L_{1}\cup L_{2}\rrbracket & =\{\left(u,s\right)\mid\left(u,s\right)\in\llbracket L_{1}\rrbracket &  & \mbox{Alternation}\\
 & \hphantom{{}=\{\left(u,s\right)\mid{}}\llap{\mbox{or }}\left(u,s\right)\in\llbracket L_{2}\rrbracket\}\\
\llbracket L\rightarrow f\rrbracket & =\left\{ \left(w,f\,s\right)\mid\left(w,s\right)\in\llbracket L\rrbracket\right\}  &  & \mbox{Reduction}
\end{alignedat}
\]

\centering{}\caption{\label{fig:parsing-expression-forms}Parsing expression forms}
\end{figure}

Explicitly enumerating the possibly infinite set of words in a language
can be cumbersome, so we express regular languages using the expression
forms in \prettyref{fig:parsing-expression-forms}. For the most part,
these consist of the traditional regular expression forms. The $\epsilon_{s}$
form is the language of the empty string, $\emptyset$ is the empty
language, $c$ is a single token, $(\circ)$ concatenates, and $(\cup)$
forms alternatives. In \citet{Might:2011:PDF:2034773.2034801}, every
expression also produces an abstract syntax tree (AST) upon success.
So, $\epsilon_{s}$ is annotated with a subscript $s$ indicating
the AST to be returned, and the reduction form $L\rightarrow f$ behaves
like $L$, except that it returns the result of applying $f$ to the
AST returned by $L$. The semantics of these forms are as in \prettyref{fig:parsing-expression-forms}
and are defined as sets of accepted strings paired with the AST that
returns for that string. For the purposes of parsing single tokens,
$c$, and concatenations, $\left(\circ\right)$, we assume the type
of ASTs includes tokens and pairs of ASTs.

Note that in this paper, we use $\epsilon$ for the empty word and
$\epsilon_{s}$ for the parsing expression that represents a language
containing only the empty word. Similarly, we use $c$ to refer to
either the single-token word or the parsing expression signifying
a language containing only one token.

Also, although \citet{Might:2011:PDF:2034773.2034801} include a form
for Kleene star, we omit this. Once these forms are extended from
regular expressions to CFGs in \prettyref{sub:Derivatives-of-context-free},
any use of Kleene star can be replaced with a definition like the
following.
\[
L^{*}=\epsilon_{s}\cup\left(L\circ L^{*}\right)
\]

\subsection{Derivatives of Parsing Expressions}

The derivatives of the language forms in \prettyref{fig:parsing-expression-forms}
with respect to a token $c$ are shown in \prettyref{fig:Derivatives-of-parsing-expression-forms}.
The derivative of $\emptyset$ is $\emptyset$, as $\llbracket\emptyset\rrbracket$
contains no words beginning with any character. For the same reason,
the derivative of $\epsilon_{s}$ is also $\emptyset$. The derivative
of a token $c$ depends on whether the input token matches $c$; the
result is $\epsilon_{c}$ if the input token matches and $\emptyset$
if not. The derivatives of $L_{1}\cup L_{2}$ and $L\rightarrow f$
merely take the derivatives of their children.

The derivative of $L_{1}\circ L_{2}$ has two cases, depending on
whether $\llbracket L_{1}\rrbracket$ contains $\epsilon$. If $\llbracket L_{1}\rrbracket$
\emph{does not} contain $\epsilon$, every word in the concatenated
language starts with a non-empty word from $L_{1}$. This means the
derivative of $L_{1}\circ L_{2}$ filters and removes the first token
from the words in $L_{1}$ while leaving $L_{2}$ alone. Thus, the
derivative of $L_{1}\circ L_{2}$ if $L_{1}$ does not contain $\epsilon$
is $D_{c}\left(L_{1}\right)\circ L_{2}$.

On the other hand, if $\llbracket L_{1}\rrbracket$ \emph{does} contain
$\epsilon$, then the derivative contains not only all the words in
$D_{c}\left(L_{1}\right)\circ L_{2}$ but also derivatives for when
the $\epsilon$ in $L_{1}$ is concatenated with words in $L_{2}$.
Since these concatenations are all words from $L_{2}$, this adds
$D_{c}\left(L_{2}\right)$ to the derivative. In this case, $D_{c}(L_{1}\circ L_{2})$
is therefore $(D_{c}\left(L_{1}\right)\circ L_{2})\cup D_{c}\left(L_{2}\right)$.

\subsection{\label{sub:Nullability}Nullability}

\begin{figure}[tb]
\[
\begin{alignedat}{1}D_{c}\left(\emptyset\right) & =\emptyset\\
D_{c}\left(\epsilon\right) & =\emptyset\\
D_{c}\left(c'\right) & =\begin{cases}
\epsilon_{c} & \mbox{ if }c=c'\\
\emptyset & \mbox{ if }c\ne c'
\end{cases}\\
D_{c}\left(L_{1}\cup L_{2}\right) & =D_{c}\left(L_{1}\right)\cup D_{c}\left(L_{2}\right)\\
D_{c}\left(L_{1}\circ L_{2}\right) & =\begin{cases}
\phantom{(}D_{c}\left(L_{1}\right)\circ L_{2} & \mbox{ if }\epsilon\notin\llbracket L_{1}\rrbracket\\
(D_{c}\left(L_{1}\right)\circ L_{2})\cup D_{c}\left(L_{2}\right) & \mbox{ if }\epsilon\in\llbracket L_{1}\rrbracket
\end{cases}\\
D_{c}\left(L\rightarrow f\right) & =D_{c}\left(L\right)\rightarrow f
\end{alignedat}
\]

\caption{\label{fig:Derivatives-of-parsing-expression-forms}Derivatives of
parsing expression forms}
\end{figure}

\begin{figure}[tb]
\[
\begin{alignedat}{1}\delta\left(\emptyset\right) & =\mathbf{false}\\
\delta\left(\epsilon_{s}\right) & =\mathbf{true}\\
\delta\left(c\right) & =\mathbf{false}\\
\delta\left(L_{1}\cup L_{2}\right) & =\delta\left(L_{1}\right)\mbox{ or }\delta\left(L_{2}\right)\\
\delta\left(L_{1}\circ L_{2}\right) & =\delta\left(L_{1}\right)\mbox{ and }\delta\left(L_{2}\right)\\
\delta\left(L\rightarrow f\right) & =\delta\left(L\right)
\end{alignedat}
\]

\caption{\label{fig:Nullability}Nullability of parsing expression forms}
\end{figure}

Because the derivative of a concatenation $L_{1}\circ L_{2}$ depends
on whether $\llbracket L_{1}\rrbracket$ contains the empty string,
$\epsilon$, we define a nullability function, $\delta\left(L\right)$,
in \prettyref{fig:Nullability} such that it returns boolean true
or false when $\llbracket L\rrbracket$ respectively contains $\epsilon$
or does not. The null language, $\emptyset$, contains nothing, and
the single-token language, $c$, contains only the word consisting
of the token $c$. Because neither of these languages contain $\epsilon$,
their nullability is false. Conversely, the $\epsilon_{s}$ language
contains only the $\epsilon$ word, so its nullability is true. The
union of two languages contains $\epsilon$ if either of its children
contains $\epsilon$, so the union is nullable if either $L_{1}$
or $L_{2}$ is nullable. Given how the semantics of the concatenation
$L_{1}\circ L_{2}$ are defined in \prettyref{fig:parsing-expression-forms},
in order for $L_{1}\circ L_{2}$ to contain $\epsilon$, there must
exist a $uv$ equal to $\epsilon$. This happens only when $u$ and
$v$ are both $\epsilon$, so a concatenation is nullable if and only
if both its children are nullable. Finally, the words in a reduction
$L\rightarrow f$ are those words in $L$, so its nullability is the
nullability of $L$.

\subsection{Derivatives of Context-free Languages\label{sub:Derivatives-of-context-free}}

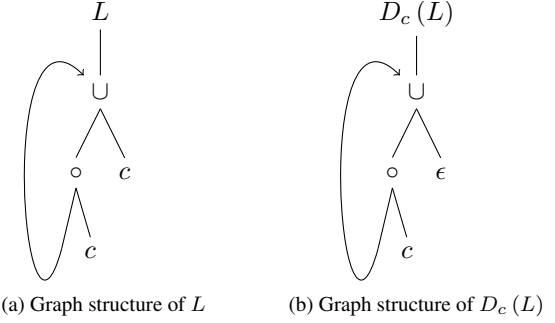
\begin{figure}[tb]
\hfill{}\subfloat[\label{fig:cyclic-graph-of-L}Graph structure of $L$]{\begin{tikzpicture}
\tikzset{level distance=3em}
\clip (-2,0.25) rectangle (2,-3.5);
\Tree [.$L$ [.\node(root){$\cup$}; [.$\circ$ \node(cycle){}; $c$ ] $c$ ] ]
\path[draw,->] (cycle.north) .. controls +(255:2) and +(north west:2) .. (root);
\end{tikzpicture}

}\hfill{}\subfloat[\label{fig:cyclic-graph-of-DcL}Graph structure of $D_{c}\left(L\right)$]{\begin{tikzpicture}
\tikzset{level distance=3em}
\clip (-2,0.25) rectangle (2,-3.5);
\Tree [.$D_c\left(L\right)$ [.\node(root){$\cup$}; [.$\circ$ \node(cycle){}; $c$ ] $\epsilon$ ] ]
\path[draw,->] (cycle.north) .. controls +(255:2) and +(north west:2) .. (root);
\end{tikzpicture}}\hfill{}

\caption{An example grammar and its derivative}
\end{figure}

\subsubsection{Representation}

\citet{Might:2011:PDF:2034773.2034801} generalize from taking derivatives
of regular expressions to taking derivatives of full CFGs. In so doing,
\citet{Might:2011:PDF:2034773.2034801} do not use typical CFGs but
do use an equivalent construction.

First, instead of non-terminals mapping to zero or more sequences
of terminals and non-terminals, they map to a parsing expression.
This is akin to the parsing expressions in \citet{Ford:2004:10.1145/964001.964011}.
For example, any CFG can be converted to this expression form by converting
productions of the form 
\[
N\Coloneqq X_{11}\cdots X_{1m_{1}}\mid\cdots\mid X_{n1}\cdots X_{nm_{n}}
\]
to 
\[
N=X_{11}\circ\ldots\circ X_{1m_{1}}\cup\cdots\cup X_{n1}\circ\ldots\circ X_{nm_{n}}
\]

Second, in the data structures representing grammars, instead of using
explicitly named non-terminals, parsing expressions point directly
to the non-terminal's parsing expression. For example, we may have
a grammar like the following, where $c$ is some token.
\[
L=\left(L\circ c\right)\cup c
\]

\citet{Might:2011:PDF:2034773.2034801} represent this as the data
structure in \prettyref{fig:cyclic-graph-of-L} with the edge where
$L$ refers back to itself, forming a cycle in the data structure.
For the purposes of discussion, though, we will refer to non-terminals
and their names even though the actual representation uses direct
pointers instead of non-terminal names.

\subsubsection{Computation}

A complication of this representation occurs when taking a derivative.
If we blindly follow the rules in \prettyref{fig:Derivatives-of-parsing-expression-forms},
then the derivative of $L$ by $c$ is the following.
\[
D_{c}(L)=\left(D_{c}(L)\circ c\right)\cup\epsilon
\]
This $D_{c}\left(L\right)$ is recursive, so to compute $D_{c}\left(L\right)$,
we must already know $D_{c}\left(L\right)$!

\citet{Might:2011:PDF:2034773.2034801} solve this problem with two
measures. First, they memoize their derivation function, \code{derive},
by keeping a table containing, for each set of arguments, the results
that it returns. When \code{derive} is called, if the table already
contains an entry for its arguments, $\code{derive}$ uses the result
in the entry instead of re-computing the derivative. Otherwise, \code{derive}
performs the calculation as usual and, before returning, stores its
result in the memoization table so it can be used by any further calls
with those same arguments. If the same derivative is needed multiple
times, this ensures it is computed only once.

On its own, memoization does not prevent infinite loops due to cycles,
however, because $\code{derive}$ adds memoization table entries only
after it finishes computing. This is where a second measure comes
into play. Before doing any recursive calls, $\code{derive}$ puts
a partially constructed grammar node that is missing its children
into the memoization table. For the example of $D_{c}\left(L\right)$,
we know without having to recur into $L$'s children that the resultant
node is a $\cup$ . Thus, we can place such a node in the memoization
table before computing its children and temporarily mark its children
as unknown. Any recursive calls to $D_{c}\left(L\right)$ can find
and use this memoized result even though the derivatives of its children
have not yet been calculated. When the derivatives for the node's
children return, we update the children of the output node to point
to the results of those derivatives. This process can be viewed as
a sort of lazy computation and results in a graph structure like in
\prettyref{fig:cyclic-graph-of-DcL}.

Like with the derivative, computing nullability must also deal with
cycles in grammars. However, memoization alone is not sufficient here.
A cycle means the derivative of some node must point to one of its
ancestors. With nullability, though, we must not only compute the
nullability of an ancestor but also inspect its value so we can compute
the nullability of the current node. This turns nullability into a
least fixed point problem over the lattice of booleans. \citet{Might:2011:PDF:2034773.2034801}
implement this with a naive algorithm that initially assumes all nodes
are not nullable and then recomputes the nullability of all nodes
reachable from a particular root node, using the current values for
each node. If, in the process, any nodes are newly discovered to be
nullable, then all reachable nodes are re-traversed and this process
is repeated until there are no more changes.

\subsection{Performance}

Despite PWD's simplicity and elegance, \citet{Might:2011:PDF:2034773.2034801}
report significant problems with its performance. Firstly, they compute
a worst-case bound of $O(2^{2n}G^{2})$ for a grammar of size $G$
and an input of size $n$. Despite this, they note that average parse
time seems to be linear in the length of the input. Unfortunately,
even with this apparent linear behavior, their parser is exceedingly
slow. For example, they report that a 31-line Python file took three
minutes to parse! Using an optimization they call compaction that
prunes branches of the derived grammars as they emerge, they report
that execution time for the 31-line input comes down to two seconds.
Still, this is exceedingly slow for such a small input.

\section{Complexity Analysis\label{sec:Complexity-Analysis}}

\citet{Might:2011:PDF:2034773.2034801} report an exponential bound
for their algorithm, but they never show it is a tight bound. On the
contrary, it turns out that PWD can, in fact, be implemented in cubic
time.

As mentioned before, at its core, PWD involves four recursive functions:
\code{nullable?}, \code{derive}, \code{parse-null}, and \code{parse}.
The \code{nullable?} and \code{derive} functions implement $\delta\left(L\right)$
and $D_{c}\left(L\right)$, respectively; the \code{parse-null} function
extracts the final AST; and \code{parse} implements the outer loop
over input tokens. In \prettyref{sub:Complexity-in-terms-of-grammar-nodes},
we observe that the running times of these functions are bounded by
the number of grammar nodes in the initial grammar plus the number
of grammar nodes constructed during parsing. Next, in \prettyref{sub:nodes-in-terms-of-length}
we discover that the total number of nodes constructed during parsing
is $O\mathopen{}\left(Gn^{3}\right)$, where $G$ is the size of the
initial grammar and $n$ is the length of the input. How to structure
this part of the proof is the essential insight in our analysis and
is based on counting unique names that we assign to nodes. When combined
with the results from \prettyref{sub:Complexity-in-terms-of-grammar-nodes},
this then leads to a cubic bound on the total runtime.

Throughout this section, let $G$ be the number of grammar nodes in
the initial grammar, let $g$ be the number of nodes created during
parsing, and let $n$ be the length of the input. Also, when analyzing
a memoized function, we consider the cost of the check to see if a
memoized result exists for a particular input to be part of the running
time of the caller instead of the callee.

\subsection{Total Running Time in Terms of Grammar Nodes\label{sub:Complexity-in-terms-of-grammar-nodes}}

First, consider \code{nullable?}, which computes a boolean value
for each parse node in terms of a least fixed point. The implementation
by \citet{Might:2011:PDF:2034773.2034801} iteratively re-traverses
the grammar until no new nodes can be proven \code{nullable?}. Such
an algorithm is quadratic in the number of nodes over which \code{nullable?}
is being computed because each traversal might update only one node.
However, a more intelligent algorithm that tracks dependencies between
nodes and operates over the boolean lattice can implement this function
in linear time, as shown in the following lemma.
\begin{lem}
\label{lem:nullable-and-empty-running-time}The sum of the running
times of all invocations of \code{nullable?} is $O\mathopen{}\left(G+g\right)\mathclose{}$.\end{lem}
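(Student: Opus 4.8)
The plan is to recognize that \code{nullable?} computes the least fixed point of the monotone system of boolean equations given in \prettyref{fig:Nullability}: each node contributes one equation whose right-hand side mentions only that node's (at most two) immediate children. First I would record the three structural facts that make a linear bound possible. The value domain is the two-element boolean lattice, so each node's computed value is monotone and can change at most once over the whole parse, flipping from \textbf{false} to \textbf{true} and never back. Every node has at most two children, so the grammar graph, including all nodes ever materialized, has $O(G+g)$ nodes and $O(G+g)$ edges. And by the convention stated above, the cost of each memoization-table lookup is charged to the caller, so the per-node bookkeeping we introduce stays $O(1)$.

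Next I would replace the naive re-traversal with a worklist algorithm that tracks reverse dependencies. Alongside the usual child pointers, the algorithm maintains, for each node, the list of parents whose equation mentions it; these reverse pointers are installed in $O(1)$ time as each node is constructed, for a total of $O(G+g)$. All nodes start at \textbf{false}. Whenever a node is newly discovered to be \textbf{true} (for instance an $\epsilon_s$ leaf, or a $\cup$ node one of whose children has just become nullable), the algorithm enqueues exactly its parents for re-evaluation; re-evaluating a node reads its at-most-two children and costs $O(1)$. Because the seed set and the propagation respect the equations of \prettyref{fig:Nullability}, the values stabilize precisely at the least fixed point, so correctness is immediate from monotonicity.

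The running-time bound then follows by amortization across \emph{all} invocations rather than any single one, which is why the statement sums over invocations. A node is re-evaluated only when one of its children flips, and each child flips at most once; since each node has at most two children, each node is re-evaluated $O(1)$ times, and each re-evaluation is $O(1)$ work. Summing over the at most $G+g$ nodes gives $O(G+g)$ for all propagation, to which we add the $O(G+g)$ cost of building the reverse pointers and seeding the worklist.

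The step I expect to be the real obstacle is not the batch analysis but reconciling it with the lazy, incremental way \code{nullable?} is actually used: it is queried repeatedly by \code{derive} on a grammar that keeps growing, and cycles mean a query can depend on one of its own ancestors. The argument that rescues the bound is that monotonicity holds not just within one fixed-point solve but across the entire run---once a node is known nullable it stays nullable, so the global count of \textbf{false}-to-\textbf{true} flips is still capped by the total node count $G+g$. Care is needed to ensure that newly created nodes, and the cycles they may close, only ever add dependencies consistent with this monotone picture, so that no flip is ever undone and the total flip count never exceeds one per node.
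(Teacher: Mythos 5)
Your proposal is correct and takes essentially the same route as the paper's proof: a Kildall-style data-flow worklist that tracks reverse dependencies, with the linear bound following from the constant height of the boolean lattice (each node flips from \textbf{false} to \textbf{true} at most once) and the fact that each node contributes at most two dependency edges, giving total work linear in the number of nodes ever created, i.e.\ $O\mathopen{}\left(G+g\right)\mathclose{}$. Your closing discussion of monotonicity across invocations on the growing grammar makes explicit an amortization the paper leaves implicit, but it is an elaboration of the same argument rather than a different approach.
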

\begin{proof}
The fixed point to calculate \code{nullable?} can be implemented
by a data-flow style algorithm \citep{Kildall:1973:UAG:512927.512945}
that tracks which nodes need their nullability reconsidered when a
given node is discovered to be nullable. Such an algorithm is linear
in the product of the height of the lattice for the value stored at
each node and the number of direct dependencies between nodes. In
this case, the lattice is over booleans and is of constant height.
Since each node directly depends on at most two children, the number
of dependencies is bounded by twice the number of nodes ever created.
\end{proof}
Next, we have \code{derive}. Since \code{derive} is memoized, it
is tempting to analyze it in terms of the nodes passed to it. However,
each node may have its derivative taken with multiple different input
tokens. The work done by \code{derive} thus depends on the number
of tokens by which each node is derived, so we can instead simplify
things by analyzing \code{derive} in terms of the nodes that it constructs.
\begin{lem}
\label{lem:derive-running-time}The sum of the running times of all
invocations of \code{derive} is $O\mathopen{}\left(G+g\right)\mathclose{}$.\end{lem}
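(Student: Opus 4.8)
The goal is to bound the total work of \code{derive} by $O(G+g)$.  The key idea, following the convention stated just before the lemma, is to charge all the work of \code{derive} against the grammar nodes it \emph{constructs} rather than the nodes it is called on, and to observe that the memoization-plus-placeholder scheme guarantees that each construction happens at most once per (node, token) pair that actually produces output.

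First I would recall the structure of \code{derive} from \prettyref{sub:Derivatives-of-context-free}.  A single call $\code{derive}(L,c)$ inspects the top-level form of $L$, consults the memoization table, and if no entry exists, allocates one output node (of a form determined solely by the form of $L$: a $\cup$, $\circ$, $\rightarrow$, etc.), inserts this partially built placeholder into the table \emph{before} recurring, and then recurs on the children to fill in the placeholder's fields.  By the accounting convention, the cost of checking the memoization table on entry is billed to the caller, so the only cost I must account for in the callee is: the constant work of examining $L$'s form, allocating the single output node, and issuing a bounded number of recursive calls (at most two, by the rules in \prettyref{fig:Derivatives-of-parsing-expression-forms}).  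I would also note that the concatenation rule may consult $\delta(L_1)$; the cost of that query is subsumed in \prettyref{lem:nullable-and-empty-running-time} and contributes nothing new here.

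The central step is then a counting argument.  Because the placeholder is installed before any recursion, a second call $\code{derive}(L,c)$ with the same arguments finds the table entry and returns immediately — so it costs only the caller-side table check and allocates \emph{no} new node.  Hence the number of calls that perform allocation-and-recursion is at most the number of distinct $(L,c)$ pairs for which an output node is built, and each such call builds exactly one node.  Every node that is built is, by definition, one of the $g$ nodes created during parsing (or, in the degenerate case where a derivative coincides with an initial node, one of the $G$ initial nodes).  Since each allocating call does $O(1)$ own work and spawns at most two recursive calls, the total number of calls — allocating and memoized alike — is within a constant factor of the number of nodes built, i.e.\ $O(G+g)$, and each call's own cost is $O(1)$.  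Summing gives the claimed $O(G+g)$ bound.

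The step I expect to be the main obstacle is making the charging rigorous in the presence of the placeholder trick, i.e.\ arguing cleanly that the ``missing children'' phase does not cause a node to be counted more than once and that the recursive calls triggered \emph{during} the fill-in phase are correctly attributed.  The subtlety is that a node is entered into the table in a half-built state, so I must be careful that the single allocation is charged once and that any re-entrant call hitting the placeholder is treated as a memoized (non-allocating) hit, whose cost lives with its caller.  Once this bookkeeping is pinned down, the bound follows immediately from the observation that at most $O(G+g)$ nodes are ever constructed and that \code{derive}'s work is constant per constructed node.
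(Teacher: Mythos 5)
Your proposal is correct and follows essentially the same route as the paper's proof: charge $O(1)$ work to each uncached call of \code{derive}, observe that every such call constructs at least one node so the total is $O(g)$, bill memoization-table checks to the caller per the stated convention, and account for the nullability queries by invoking \prettyref{lem:nullable-and-empty-running-time}, yielding $O\mathopen{}\left(G+g\right)\mathclose{}$. Your additional care with the placeholder bookkeeping is a more explicit rendering of what the paper leaves implicit, not a different argument.
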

\begin{proof}
Every call to \code{derive} that is not cached by memoization creates
at least one new node and, excluding the cost of recursive calls,
does $O\mathopen{}\left(1\right)\mathclose{}$ work. As a result,
the number of nodes created, $g$, is at least as great as the amount
of work done. Thus, the work done by all calls to \code{derive} is
$O\mathopen{}\left(g\right)\mathclose{}$ plus the work done by \code{nullable?}.
By \prettyref{lem:nullable-and-empty-running-time}, this totals to
$O\mathopen{}\left(G+g\right)\mathclose{}$.
\end{proof}
Next, we have \code{parse-null}. For this part of the proof, we assume
that ASTs use ambiguity nodes and a potentially cyclic graph representation.
This is a common and widely used assumption when analyzing parsing
algorithms. For example, algorithms like GLR \citep{Lang:1974:10.1007/3-540-06841-4_65}
and \citeauthor{early-phd} \citep{early-phd,Earley:1970:10.1145/362007.362035}
are considered cubic, but only when making such assumptions. Without
ambiguity nodes, the grammar \code{S -> S S | a | b} has an exponential
number of unique parses for strings of length $n$ that have no repeated
substrings of length greater than $\log_{2}n$. Many of those parses
share common sub-trees, so it does not take exponential space when
represented with ambiguity nodes. Our implementation is capable of
operating either with or without such a representation, but the complexity
result holds only with the assumption.

Under these assumptions, \code{parse-null} is a simple memoized function
over grammar nodes and thus is linear.
\begin{lem}
\label{lem:parse-null-running-time}The sum of the running times of
all invocations of \code{parse-null} is $O\mathopen{}\left(G+g\right)$.\end{lem}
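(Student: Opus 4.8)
The plan is to show that \code{parse-null} is a memoized recursive function whose per-node work, excluding recursive calls and the nullability checks already accounted for, is constant, so that the total work is bounded by the number of distinct nodes on which \code{parse-null} is invoked. Since \code{parse-null} is invoked only on grammar nodes, and the total number of grammar nodes ever in existence is $G+g$ (the initial grammar plus those created during parsing), this will immediately give the $O\mathopen{}\left(G+g\right)$ bound. This mirrors the structure of the proof of \prettyref{lem:derive-running-time}, and I would write it in parallel to emphasize the common pattern.

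First I would argue that, because \code{parse-null} is memoized over grammar nodes, each distinct node triggers at most one non-cached invocation; every subsequent call on that node is resolved by a memoization lookup, whose cost (per our stated convention) is charged to the caller rather than the callee. Next I would check that a single non-cached invocation does only constant work apart from its recursive calls: it inspects the node's form (one of the six cases in \prettyref{fig:parsing-expression-forms}), and for each form it combines the already-computed results from a bounded number of children --- at most two --- to produce the node's extracted AST (building a pair for $\circ$, an ambiguity node for $\cup$, applying $f$ for a reduction, and so on). Crucially, this is where the assumption that ASTs use ambiguity nodes and a shared, possibly cyclic graph representation does the work: combining children's results costs $O(1)$ because we build a single shared node rather than materializing an exponentially large set of parse trees.

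The one subtlety to handle carefully is the interaction with cycles and with \code{nullable?}. Like nullability, extracting the null parse of a recursively defined node may require the null parse of one of its ancestors, so \code{parse-null} is not a plain structural recursion but a fixed-point computation over the lattice of (shared) AST values. I would observe that \code{parse-null} need only descend into a child when that child is nullable, and that the cost of these nullability tests is, by convention and by \prettyref{lem:nullable-and-empty-running-time}, already folded into the $O\mathopen{}\left(G+g\right)$ accounting rather than re-counted here. Thus the fixed-point iteration converges while touching each node's dependencies a constant number of times, and the total work remains linear in the number of nodes.

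The main obstacle I anticipate is making the cycle/fixed-point argument rigorous enough that the reader believes the iteration does not re-traverse nodes a super-constant number of times --- essentially the same concern that motivated the data-flow formulation of \code{nullable?} in \prettyref{lem:nullable-and-empty-running-time}. I would address this by appealing to the same data-flow style bookkeeping: track which nodes' null parses must be reconsidered when a dependency's value changes, and note that the height of the relevant value lattice is constant (the shared AST for a node stabilizes after a bounded number of updates, since each update only merges in finitely many new alternatives via ambiguity nodes), so that the product of lattice height and number of inter-node dependencies is again $O\mathopen{}\left(G+g\right)$.
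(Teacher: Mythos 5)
Your first two paragraphs reproduce the paper's proof, which is only two sentences long: every non-cached call to \code{parse-null} does $O\mathopen{}\left(1\right)\mathclose{}$ work excluding recursive calls, and there are at most $G+g$ such non-cached calls. Your use of the ambiguity-node, shared-graph AST assumption to make each per-node combination step constant-cost is exactly the role the paper assigns to that assumption, so up to that point you are on the paper's track.

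The gap is in your handling of cycles. Under the paper's assumptions, \code{parse-null} is \emph{not} a fixed-point computation over a lattice of AST values; it is plain memoized recursion in which cycles are broken the same way \code{derive} breaks them (\prettyref{sub:Derivatives-of-context-free}): a placeholder AST node is placed in the memoization table before recurring, and its children are patched in when the recursive calls return. The potentially cyclic graph representation of ASTs is precisely what makes the resulting cyclic AST a legitimate constant-size value, so each node is still processed exactly once with $O\mathopen{}\left(1\right)\mathclose{}$ work and no node is ever re-traversed. Your substitute argument --- that an iteration converges because ``the shared AST for a node stabilizes after a bounded number of updates'' --- is false as stated: the set of null parses of a node in a cyclic grammar can be infinite (which is why the cyclic representation is assumed in the first place), so a lattice of AST values ordered by merging in alternatives does not have constant height, and the data-flow accounting that works for the two-point boolean lattice in \prettyref{lem:nullable-and-empty-running-time} does not transfer. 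If \code{parse-null} genuinely required such a fixed point, your bound would not follow from your argument; fortunately it requires none, and replacing your last two paragraphs with the placeholder-memoization observation repairs the proof. (Your aside about charging nullability tests elsewhere is harmless but also unnecessary: no appeal to \code{nullable?} is needed here at all.)
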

\begin{proof}
Every call to \code{parse-null} that is not cached by memoization
does $O\left(1\right)$ work, excluding the cost of recursive calls.
There are at most $G+g$ such non-cached calls.
\end{proof}
\noindent Finally, we have the total running time of \code{parse}.
\begin{thm}
\label{thm:parse-running-time}The total running time of \code{parse}
is $O\mathopen{}\left(G+g\right)$.\end{thm}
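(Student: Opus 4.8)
The plan is to bound the running time of \code{parse} by accounting separately for its own direct work and for the work it delegates to the three subroutines already analyzed. The function \code{parse} implements the outer loop over the $n$ input tokens. At each step it takes a derivative of the current grammar with respect to the next token---invoking \code{derive}---and threads the resulting grammar into the next iteration; after the final token it calls \code{parse-null} to extract the resulting ASTs. So first I would make explicit that the entire computation performed by \code{parse} consists of (i) its own bookkeeping in the loop, (ii) all the calls it makes to \code{derive}, (iii) all the calls it makes to \code{nullable?}, and (iv) the terminal call(s) to \code{parse-null}.

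The key observation is that the costs of (ii), (iii), and (iv) have each \emph{already} been bounded, summed over all invocations, by $O(G+g)$ in \prettyref{lem:derive-running-time}, \prettyref{lem:nullable-and-empty-running-time}, and \prettyref{lem:parse-null-running-time} respectively. Crucially these lemmas bound the \emph{total} running time across the entire parse, not the cost of a single call, so I do not need to re-sum them over the $n$ iterations of the loop---they already incorporate that summation. This is exactly why the earlier lemmas were phrased in terms of the global quantities $G$ and $g$ rather than per-call costs. The remaining piece is (i): the loop's own per-iteration overhead. Here I would argue that, aside from dispatching to \code{derive} and reading off the nullability/AST results, the loop does only $O(1)$ work per token beyond what the memoization infrastructure already charges to it. Recall the convention stated at the start of the section that the cost of a memoization-table check is charged to the caller; I would confirm that these lookup costs are subsumed either in the $O(1)$ per-iteration loop overhead or in the $O(G+g)$ already accounted for in the subroutine lemmas.

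Combining these, the total is $O(n) + O(G+g) + O(G+g) + O(G+g) = O(G+g)$, where the $O(n)$ from the loop overhead is absorbed because every iteration of the loop forces at least one derivative step and hence $g = \Omega(n)$ in any nontrivial parse (and the degenerate cases are trivially bounded). That final absorption of the $O(n)$ term into $O(G+g)$ is the one subtlety I would want to state cleanly: either by noting $n \le g$ up to a constant, or simply by folding $n$ into the statement since $g$ is introduced precisely to count constructed nodes and the loop constructs at least one per step.

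I expect the main obstacle to be purely one of \emph{bookkeeping discipline} rather than mathematical depth: the proof is essentially an additive composition of the three preceding lemmas, so the real care lies in verifying that the decomposition of \code{parse}'s work into (i)--(iv) is genuinely exhaustive and non-overlapping---that no cost (for instance, the memoization lookups, the threading of the grammar between iterations, or the management of the input stream) is either double-counted or silently dropped. The charging convention for memoization checks is the place where an error could hide, so I would lean on it explicitly to ensure the loop overhead really is $O(1)$ per token. Given that, the result follows immediately and the proof should be only a few sentences long.
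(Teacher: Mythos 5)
Your proposal takes the same route as the paper's proof, which likewise just decomposes \code{parse} into its $n$ calls to \code{derive} plus the final call to \code{parse-null} and adds the bounds from \prettyref{lem:derive-running-time} and \prettyref{lem:parse-null-running-time} (the \code{nullable?} cost being already folded into the former, so listing it as a separate delegated cost is harmless). Your explicit handling of the $O(n)$ loop overhead---absorbed because each uncached top-level call to \code{derive} constructs at least one node, so $g=\Omega(n)$ under the analysis's assumption that tokens are distinct---is a point the paper's two-sentence proof silently elides, but it is a refinement of, not a departure from, the same argument.
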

\begin{proof}
The \code{parse} function calls \code{derive} for each input token
and, at the end, calls \code{parse-null} once. By \prettyref{lem:derive-running-time}
and \prettyref{lem:parse-null-running-time}, these together total
$O\mathopen{}\left(G+g\right)$.
\end{proof}

\subsection{Grammar Nodes in Terms of Input Length\label{sub:nodes-in-terms-of-length}}

\noindent All of the results in \prettyref{sub:Complexity-in-terms-of-grammar-nodes}
depend on $g$, the number of grammar nodes created during parsing.
If we look at the definition of $D_{c}\left(L\right)$ (i.e., \code{derive})
in \prettyref{fig:Derivatives-of-parsing-expression-forms}, most
of the clauses construct only a single node and use the children of
the input node only once each. When combined with memoization, for
a given input token, these clauses create at most the same number
of nodes as there are in the grammar for the result of the derivative
just before parsing that input token. On their own, these clauses
thus lead to the construction of only $Gn$ nodes.

However, the clause for a sequence node $L_{1}\circ L_{2}$, when
$L_{1}$ is nullable, uses $L_{2}$ twice. This duplication is what
led many to believe PWD was exponential; and indeed, without memoization,
it would be. In order to examine this more closely, we assign unique
names to each node. We choose these names such that each name is unique
to the derivative of a particular node with respect to a particular
token. Thus, the naming scheme matches the memoization strategy, and
the memoization of \code{derive} ensures that two nodes with the
same name are always actually the same node.
\begin{defn}
We give each node a unique name that is a string of symbols determined
by the following rules.\medskip{}

\begin{minipage}[t]{0.95\columnwidth}%
\begin{description}
\item [{Rule~5a:}] Nodes in the initial grammar are given a name consisting
of a single unique symbol distinct from that of any other node in
the grammar.
\item [{Rule~5b:}] When the node passed to \code{derive} has the name
$w$ and is a $\circ$ node containing a nullable left child, the
$\cup$ node created by \code{derive} is given the name $w\mathord{\bullet}c$
where $\mathord{\bullet}$ is a distinguished symbol that we use for
this purpose and $c$ is the token passed to \code{derive}.
\item [{Rule~5c:}] Any other node created by \code{derive} is given a
name of the form $wc$, where $w$ and $c$ are respectively the name
of the node passed to \code{derive} and the token passed to \code{derive}.\end{description}
\end{minipage}
\end{defn}
A $\circ$ node with a nullable left child has the special case of
Rule~5b because it is the only case where \code{derive} produces
more than one node, and we need to give these nodes distinct names.
These resultant nodes are a $\cup$ node and a $\circ$ node that
is the left child of the $\cup$ node. The introduction of the $\bullet$
symbol in the name of the $\cup$ node keeps this name distinct from
the name of the $\circ$ node.

As an example of these rules, \prettyref{fig:node-names-example}
shows the nodes and corresponding names for the nodes created when
parsing the following grammar. 
\[
L=\left(L\circ L\right)\cup c
\]

In this example, $c$ accepts any token; the initial names are $L$,
$M$, and $N$; and the input is $c_{1}c_{2}c_{3}c_{4}$. Each node
in \prettyref{fig:node-names-example} is labeled with its name in
a subscript, and children that point to already existing nodes are
represented with a box containing the name of that node. For example,
the root of the first grammar contains the node named $L$ as its
root, and the node named $M$ in that tree has $L$ as both its children.
The dotted arrows in this diagram show where concatenation causes
duplication. The node $M$ produces $Mc_{1}$, $Mc_{1}$ produces
$Mc_{1}\mathord{\bullet}c_{2}$ and $Mc_{1}c_{2}$, and so on.

A nice property of these rules can be seen if we consider node names
with their initial unique symbol and any $\mathord{\bullet}$ symbols
removed. The remaining symbols are all tokens from the input. Furthermore,
these symbols are added by successive calls to \code{derive} and
thus are substrings of the input. This lets us prove the following
lemma.
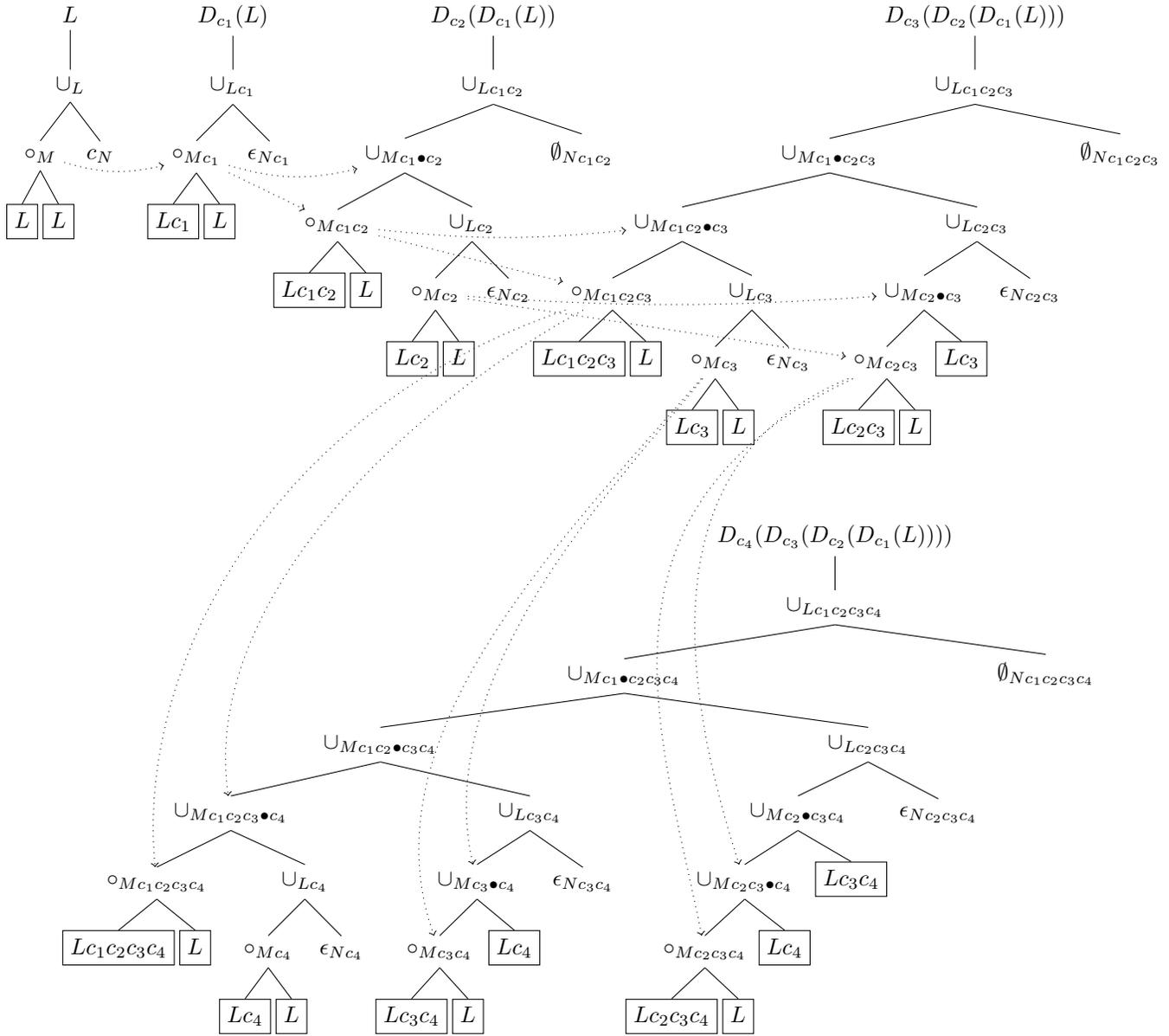
\begin{figure*}[tb]
\noindent \begin{center}
\begin{tikzpicture}
\tikzset{level distance=3em}

\begin{scope}
\Tree [.$L$ [.$\cup_{L}$
 [.\node(M){$\circ_{M}$}; \node[draw]{$L$\vphantom{$c_0$}}; \node[draw]{$L$\vphantom{$c_0$}}; ] $c_{N}$ ] ]
\end{scope}

\begin{scope}[xshift=2.5cm]
\Tree [.$D_{c_1}(L)$ 
 [.$\cup_{Lc_1}$
  [.\node(Mc1){$\circ_{Mc_1}$};
   \node[draw]{$Lc_1$};
   \node[draw]{$L$\vphantom{$c_0$}};
  ]
  $\epsilon_{Nc_1}$
 ]
]
\end{scope}

\begin{scope}[xshift=6.5cm]
\Tree [.$D_{c_2}(D_{c_1}(L))$
 [.$\cup_{Lc_1c_2}$
  [.\node(Mc1Bc2){$\cup_{Mc_1\bullet{}c_2}$};
   [.\node(Mc1c2){$\circ_{Mc_1c_2}$};
    \node[draw]{$Lc_1c_2$};
    \node[draw]{$L$\vphantom{$c_0$}};
   ]
   [.$\cup_{Lc_2}$
    [.\node(Mc2){$\circ_{Mc_2}$};
     \node[draw]{$Lc_2$};
     \node[draw]{$L$\vphantom{$c_0$}};
    ]
    $\epsilon_{Nc_2}$
   ]
  ]
 $\emptyset_{Nc_1c_2}$
 ]
]
\end{scope}

\begin{scope}[xshift=13.9cm]
\Tree [.$D_{c_3}(D_{c_2}(D_{c_1}(L)))$
 [.$\cup_{Lc_1c_2c_3}$
  [.$\cup_{Mc_1\bullet{}c_2c_3}$
   [.\node(Mc1c2Bc3){$\cup_{Mc_1c_2\bullet{}c_3}$};
    [.\node(Mc1c2c3){$\circ_{Mc_1c_2c_3}$};
     \node[draw]{$Lc_1c_2c_3$};
     \node[draw]{$L$\vphantom{$c_0$}};
    ]
    [.$\cup_{Lc_3}$
     [.\node(Mc3){$\circ_{Mc_3}$};
      \node[draw]{$Lc_3$};
      \node[draw]{$L$\vphantom{$c_0$}};
     ]
     $\epsilon_{Nc_3}$
    ]
   ]
   [.$\cup_{Lc_2c_3}$
    [.\node(Mc2Bc3){$\cup_{Mc_2\bullet{}c_3}$};
     [.\node(Mc2c3){$\circ_{Mc_2c_3}$};
      \node[draw]{$Lc_2c_3$};
      \node[draw]{$L$\vphantom{$c_0$}};
     ]
     \node[draw]{$Lc_3$};
    ]
    $\epsilon_{Nc_2c_3}$
   ]
  ]
 $\emptyset_{Nc_1c_2c_3}$
 ]
]
\end{scope}

\begin{scope}[xshift=11.75cm,yshift=-8cm]
\Tree [.$D_{c_4}(D_{c_3}(D_{c_2}(D_{c_1}(L))))$
 [.$\cup_{Lc_1c_2c_3c_4}$
  [.$\cup_{Mc_1\bullet{}c_2c_3c_4}$
   [.$\cup_{Mc_1c_2\bullet{}c_3c_4}$
    [.\node(Mc1c2c3Bc4){$\cup_{Mc_1c_2c_3\bullet{}c_4}$};
     [.\node(Mc1c2c3c4){$\circ_{Mc_1c_2c_3c_4}$};
      \node[draw]{$Lc_1c_2c_3c_4$};
      \node[draw]{$L$\vphantom{$c_0$}};
     ]
     [.$\cup_{Lc_4}$
      [.$\circ_{Mc_4}$
       \node[draw]{$Lc_4$};
       \node[draw]{$L$\vphantom{$c_0$}};
      ]
      $\epsilon_{Nc_4}$
     ]
    ]
    [.$\cup_{Lc_3c_4}$
     [.\node(Mc3Bc4){$\cup_{Mc_3\bullet{}c_4}$};
      [.\node(Mc3c4){$\circ_{Mc_3c_4}$};
       \node[draw]{$Lc_3c_4$};
       \node[draw]{$L$\vphantom{$c_0$}};
      ]
      \node[draw]{$Lc_4$};
     ]
     $\epsilon_{Nc_3c_4}$
    ]
   ]
   [.$\cup_{Lc_2c_3c_4}$
    [.$\cup_{Mc_2\bullet{}c_3c_4}$
     [.\node(Mc2c3Bc4){$\cup_{Mc_2c_3\bullet{}c_4}$};
      [.\node(Mc2c3c4){$\circ_{Mc_2c_3c_4}$};
       \node[draw]{$Lc_2c_3c_4$};
       \node[draw]{$L$\vphantom{$c_0$}};
      ]
      \node[draw]{$Lc_4$};
     ]
     \node[draw]{$Lc_3c_4$};
    ]
    $\epsilon_{Nc_2c_3c_4}$
   ]
  ]
 $\emptyset_{Nc_1c_2c_3c_4}$
 ]
]
\end{scope}

\path[draw,dotted,->] (M) to[bend right=15] (Mc1);
\path[draw,dotted,->] (Mc1) to[bend right=15] (Mc1Bc2);
\path[draw,dotted,->] (Mc1) to[] (Mc1c2);

\path[draw,dotted,->] (Mc1c2) to[bend right=6] (Mc1c2Bc3);
\path[draw,dotted,->] (Mc1c2) to[] (Mc1c2c3);
\path[draw,dotted,->] (Mc2) to[bend right=4] (Mc2Bc3);
\path[draw,dotted,->] (Mc2) to[] (Mc2c3);

\path[draw,dotted,->] (Mc1c2c3) .. controls (5,-6) and (2,-9) .. (Mc1c2c3Bc4);
\path[draw,dotted,->] (Mc1c2c3) .. controls (3,-6) and (1,-10) .. (Mc1c2c3c4);
\path[draw,dotted,->] (Mc3) .. controls (8.5,-7) and (5.5,-10) .. (Mc3Bc4);
\path[draw,dotted,->] (Mc3) .. controls (7.5,-8) and (4.5,-11) .. (Mc3c4);
\path[draw,dotted,->] (Mc2c3) .. controls (9.5,-6.5) and (9,-9) .. (Mc2c3Bc4);
\path[draw,dotted,->] (Mc2c3) .. controls (7.75,-7.5) and (9.1,-11) .. (Mc2c3c4);

\end{tikzpicture}
\par\end{center}

\caption{\label{fig:node-names-example}Worst-case behavior of PWD. Nodes are
annotated with their names in subscripts.}
\end{figure*}

\begin{lem}
The number of strings of symbols consisting of node names with their
initial unique symbols and any $\mathord{\bullet}$ symbols removed
is $O\mathopen{}\left(n^{2}\right)\mathclose{}$ .\end{lem}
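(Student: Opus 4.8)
The plan is to show that, once the initial unique symbol and every $\mathord{\bullet}$ are stripped away, the residual string attached to any node is always a \emph{contiguous} substring $c_a c_{a+1} \cdots c_b$ of the input $c_1 \cdots c_n$. Once this is established, the count is immediate: a string of length $n$ has only $\tfrac{n(n+1)}{2}+1 = O(n^2)$ contiguous substrings (one for each choice of endpoints $1 \le a \le b \le n$, plus the empty string produced by the initial nodes via Rule~5a), so at most $O(n^2)$ distinct residual strings can occur. To set this up, I would view the parse as a sequence of grammars $G_0, G_1, \dots, G_n$, where $G_i = D_{c_i}(G_{i-1})$, and say a node is \emph{created at step $i$} if it is first built while computing $G_i$. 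By Rules~5b and~5c, every call to \code{derive} during step $i$ appends the current token $c_i$ (together, in the Rule~5b case, with a $\mathord{\bullet}$), since the parse loop derives all of $G_{i-1}$ with respect to the single token $c_i$; hence the residual string of a node created at step $i$ is the residual string of the node it was derived from, extended on the right by $c_i$.

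The heart of the argument is the invariant that every residual string is contiguous, which I would prove by induction on the creation step, supported by the following structural fact: \emph{the only grammar node that \code{derive} ever carries over unchanged is the right child $L_2$ of a concatenation $L_1 \circ L_2$, and every such right child is an initial grammar node}. The first half is read directly off \prettyref{fig:Derivatives-of-parsing-expression-forms}: in every clause the subexpressions appearing in the output are either freshly built leaves, derivatives $D_c(\cdot)$ of children, or the operand $L_2$ reproduced verbatim in the concatenation clauses. The second half follows by a secondary induction: concatenation nodes are produced only by differentiating concatenation nodes, and $D_c(L_1 \circ L_2)$ reuses exactly the $L_2$ it was given, so $L_2$ always traces back to the right child of an original concatenation and therefore has an empty residual string.

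With this fact in hand, the main induction runs as follows. Because non-original nodes are never carried over, any non-original node reachable in $G_{i-1}$ must have been created precisely at step $i-1$, so its residual string ends in $c_{i-1}$; differentiating it at step $i$ appends $c_i$ and preserves contiguity by the inductive hypothesis. The only other possibility is that the node being differentiated is an original node or a carried-over right child, both of which have empty residual strings, so the new residual string is the single token $c_i$, which is trivially contiguous. This closes the induction and establishes the claimed contiguity, after which the count above completes the proof.

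I expect the main obstacle to be the structural fact about carried-over nodes, and in particular arguing cleanly that non-original nodes are never reused across more than one step, so that no residual string can ever acquire a gap (as would happen if a node created at step $i$ were left untouched and only differentiated at some later step $j > i+1$). I would also need to dispatch the bookkeeping subtlety noted in \prettyref{sub:Complexity-in-terms-of-grammar-nodes} that \code{derive} is memoized per token: each reachable node contributes its $c_i$ at most once during step $i$, which is what makes the ``created at step $i$'' relation well defined and guarantees that appended tokens track successive parse steps.
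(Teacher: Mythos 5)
Your proposal is correct and takes essentially the same approach as the paper: show that the residual strings (node names with the initial unique symbol and any $\mathord{\bullet}$ removed) are contiguous substrings of the input, and then count, since a string of length $n$ has only $O\mathopen{}\left(n^{2}\right)\mathclose{}$ distinct substrings. The paper's proof simply asserts the substring property and cites a counting reference for the bound, whereas you prove it by induction; your key structural fact --- that the only node \code{derive} ever carries over verbatim is the right child of a $\circ$ node, which always traces back to the initial grammar --- is precisely the observation the paper establishes separately in \prettyref{sub:Right-hand-children-of-seq}.
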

\begin{proof}
These strings are all substrings of the input. \citet{counting-substrings}
count the number of such substrings and show that, unsurprisingly,
it is $O\mathopen{}\left(n^{2}\right)\mathclose{}$. At an intuitive
level, this is because the number of positions where these substrings
can start and end in the input are both linear in $n$.
\end{proof}
In \prettyref{fig:node-names-example}, this can be seen by the fact
that the $c_{1}$, $c_{2}$, $c_{3}$, and $c_{4}$ occurring in node
names are always in increasing, consecutive ranges such as $c_{1}c_{2}c_{3}$
in $Mc_{1}c_{2}\mathord{\bullet}c_{3}$ or $c_{2}c_{3}$ in $Nc_{2}c_{3}$.

Another nice property of names is that they all contain at most one
occurrence of $\bullet$. This turns out to be critical. At an intuitive
level, this implies that the $\cup$ node involved in a duplication
caused by a $\circ$ node is never involved in another duplication.
\begin{lem}
Each node name contains at most one occurrence of the $\mathord{\bullet}$
symbol.\end{lem}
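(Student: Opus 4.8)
The plan is to isolate a stronger auxiliary invariant and let the main claim fall out of it: I would show that \emph{every $\circ$ node has a name containing no $\mathord{\bullet}$ symbol}. The motivation is that $\mathord{\bullet}$ enters a name only through Rule~5b, which forms the name $w\mathord{\bullet}c$ from the name $w$ of a $\circ$ node passed to \code{derive}; Rule~5a assigns single symbols and Rule~5c merely appends a token. So every occurrence of $\mathord{\bullet}$ is appended directly onto a $\circ$ node's name, and if $\circ$ nodes carry no $\mathord{\bullet}$, then each application of Rule~5b can only ever produce a single $\mathord{\bullet}$.

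To establish the auxiliary invariant, I would induct on the process that creates nodes. In the base case, the initial $\circ$ nodes are named by Rule~5a with single symbols, which contain no $\mathord{\bullet}$. For the inductive step, a $\circ$ node created during parsing can arise only from the derivative of a $\circ$ node $L_{1}\circ L_{2}$ in \prettyref{fig:Derivatives-of-parsing-expression-forms}; in both the nullable and non-nullable cases, the $\circ$ node that \code{derive} constructs is the one named by Rule~5c, giving it the name $wc$ where $w$ is the name of the input $\circ$ node. By the induction hypothesis $w$ contains no $\mathord{\bullet}$, and since $c$ is a token, the name $wc$ contains none either.

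The main claim then follows by a second, routine induction over node creation. Initial nodes (Rule~5a) contain no $\mathord{\bullet}$; a node named $wc$ by Rule~5c appends only a token to the existing name $w$, so it has at most as many $\mathord{\bullet}$ symbols as $w$; and a node named $w\mathord{\bullet}c$ by Rule~5b has, by the auxiliary invariant, a $\circ$-node name $w$ with no $\mathord{\bullet}$, so it contains exactly one. I expect the one genuinely load-bearing step to be recognizing that $\circ$ nodes are precisely the node shape whose names never accumulate a $\mathord{\bullet}$; once the invariant is tied to node shape in this way, both inductions are mechanical.
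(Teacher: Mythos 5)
Your proof is correct, and the facts it leans on do hold in the paper's setup: in \prettyref{fig:Derivatives-of-parsing-expression-forms}, the only clauses that construct a $\circ$ node are the two cases for deriving a $\circ$ node, and in both cases the new $\circ$ node is named by Rule~5c, so your auxiliary invariant that $\circ$-node names are $\mathord{\bullet}$-free goes through by induction on creation order, and the counting induction then closes the claim. The paper argues the same dichotomy from the opposite direction: it observes that Rule~5b deposits a $\mathord{\bullet}$ only into the names of $\cup$ nodes, that further derivatives of those $\cup$ nodes produce only $\cup$ nodes, and that Rule~5b fires only on $\circ$ nodes, so once a name acquires a $\mathord{\bullet}$ its entire derivative lineage grows only by Rule~5c. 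In effect, the paper tracks the $\cup$ node \emph{forward} from the moment a $\mathord{\bullet}$ appears, while you track the $\circ$ node \emph{backward} and show its name never accumulated one; the two invariants---names containing $\mathord{\bullet}$ belong to $\cup$ lineages, versus $\circ$ nodes carry $\mathord{\bullet}$-free names---are dual formalizations of the single observation that node shape is what gates Rule~5b. What your version buys is mechanical checkability: two routine inductions whose case analysis is pinned explicitly to the clauses of \code{derive}, including the slightly delicate point that in the nullable case the newly built $\circ$ node (not just the $\cup$ node) must be accounted for. What the paper's version buys is brevity---three sentences---at the cost of leaving the induction implicit (``once a $\mathord{\bullet}$ symbol is added to a name, another one cannot be added'') and never stating the invariant that justifies following only the $\cup$ lineage.
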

\begin{proof}
According to Rule~5b, a $\mathord{\bullet}$ symbol is put in the
name of only those $\cup$ nodes that come from taking the derivative
of a $\circ$ node. Further derivatives of these $\cup$ nodes can
produce only more $\cup$ nodes, so Rule~5b, which applies only to
$\circ$ nodes, cannot apply to any further derivatives of those $\cup$
nodes. Thus, once a $\mathord{\bullet}$ symbol is added to a name,
another one cannot be added to the name. 
\end{proof}
This property can be seen in \prettyref{fig:node-names-example} where
no name contains more than one $\mathord{\bullet}$, and every node
that does contain $\mathord{\bullet}$ is a $\cup$ node.

This then implies that every name is either of the form $Nw$ or $Nu\mathord{\bullet}v$,
where $N$ is the name of an initial grammar node and both $w$ and
$uv$ are substrings of the input. As a result, we can bound the number
of possible names with the following theorem.
\begin{thm}
\label{thm:number-of-nodes-created}The total number of nodes constructed
during parsing is $O\mathopen{}\left(Gn^{3}\right)\mathclose{}$.\end{thm}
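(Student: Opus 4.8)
The plan is to count the possible names directly, using the structural facts that the three preceding lemmas have already established, and then invoke Theorem~\ref{thm:parse-running-time} implicitly via the bound on $g$. The key observation, stated just before the theorem, is that every node name has one of exactly two shapes: either $Nw$, where $N$ is one of the $G$ initial-grammar symbols and $w$ is a substring of the input, or $Nu\mathord{\bullet}v$, where again $N$ is an initial symbol and $uv$ is a substring of the input. Since the memoization of \code{derive} guarantees that two nodes sharing a name are literally the same node, the number $g$ of nodes ever constructed is bounded by the number of distinct names of these two forms. So the whole proof reduces to a counting argument.

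First I would handle the $Nw$ case. There are $G$ choices for the initial symbol $N$, and by the substring-counting lemma there are $O(n^2)$ choices for the substring $w$. This contributes $O(Gn^2)$ names. Next I would handle the $Nu\mathord{\bullet}v$ case, which is where the extra factor of $n$ enters. Here $N$ again ranges over $G$ symbols, and $uv$ ranges over the $O(n^2)$ substrings of the input; but because $uv$ is a \emph{contiguous} substring of the input, specifying the single split point that separates $u$ from $v$ costs at most one additional factor of $n$ (the split position lies somewhere within a string of length at most $n$). Thus the number of names of the form $Nu\mathord{\bullet}v$ is $O(Gn^2\cdot n)=O(Gn^3)$. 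Adding the two cases, the dominant term is $O(Gn^3)$, giving $g=O(Gn^3)$ as claimed.

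The main obstacle, and the step that deserves the most care, is making the bookkeeping on the $\mathord{\bullet}$ case precise. Two subtleties must be pinned down. First, I must use the single-occurrence lemma to be sure the names really do fall into only these two shapes, so that no name with two or more $\mathord{\bullet}$ symbols (which would blow up the count) can arise. Second, I must justify that counting a substring $uv$ together with its internal split point is genuinely $O(n^3)$ and not something larger: the cleanest way is to note that $u$ and $v$ are determined by a start position, a cut position, and an end position in the input, each ranging over $O(n)$ values, so triples of positions already give the $O(n^3)$ bound without even appealing to the substring lemma. Either route works, but I would lead with the triple-of-positions framing since it makes the cubic exponent transparent and absorbs the substring lemma as a consistency check rather than a load-bearing step.

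Once $g=O(Gn^3)$ is established, the conclusion is immediate: the theorem is exactly the statement that the total number of constructed nodes is $O(Gn^3)$, so no further work is needed. I would close by remarking that combining this with Theorem~\ref{thm:parse-running-time}, whose bound was $O(G+g)$, yields the headline cubic running time $O(Gn^3)$ for \code{parse}, which is the payoff the surrounding section is building toward.
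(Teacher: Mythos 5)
Your proposal is correct and follows essentially the same route as the paper: both bound the node count by the number of distinct names, which split into the shapes $Nw$ and $Nu\mathord{\bullet}v$, and both multiply $G$ choices of initial symbol by $O\mathopen{}\left(n^{2}\right)\mathclose{}$ substrings and $O\mathopen{}\left(n\right)\mathclose{}$ positions for the $\mathord{\bullet}$ to get $O\mathopen{}\left(Gn^{3}\right)\mathclose{}$. Your triple-of-positions framing is merely a transparent restatement of that same product, so there is no substantive divergence.
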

\begin{proof}
In a name of the form $Nw$ or $Nu\mathord{\bullet}v$, the number
of possible symbols for $N$ is the size of the initial grammar, $G$.
Also, the number of possible words for $w$ or $uv$ is bounded by
the number of unique subwords in the input, which is $O\mathopen{}\left(n^{2}\right)$.
Finally, the number of positions at which $\mathord{\bullet}$ may
occur within those subwords is $O\mathopen{}\left(n\right)\mathclose{}$.
The number of unique names, and consequently the number of nodes created
during parsing, is the product of these: $O\mathopen{}\left(Gn^{3}\right)\mathclose{}$.
\end{proof}

\subsection{Running Time in Terms of Input Length}

\noindent Finally, we can conclude that the running time of parsing
is cubic in the length of the input.
\begin{thm}
\label{thm:running-time-of-parsing}The running time of \code{parse}
is $O\mathopen{}\left(Gn^{3}\right)\mathclose{}$.\end{thm}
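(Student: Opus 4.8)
The plan is to obtain this bound as an immediate corollary of the two principal results already established in this section, so almost all of the genuine work has in fact been done. First I would recall \prettyref{thm:parse-running-time}, which states that the total running time of \code{parse} is $O(G+g)$, where, following the conventions fixed at the start of the section, $G$ is the number of grammar nodes in the initial grammar and $g$ is the number of grammar nodes created during parsing. The content of that theorem is that every unit of work performed by \code{parse}, and hence by the memoized subroutines \code{derive}, \code{nullable?}, and \code{parse-null} that it drives, can be charged either to a node of the initial grammar or to a node constructed along the way.

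Next I would feed in the node count supplied by \prettyref{thm:number-of-nodes-created}, which bounds $g$ by $O(Gn^3)$ via the argument that counts the unique names assigned by the preceding definition. The key bookkeeping point is that the quantity $g$ appearing in the two theorems is literally the same object, namely the number of grammar nodes created while deriving the initial grammar by each successive input token, so the two bounds compose directly without any reinterpretation. Substituting $g = O(Gn^3)$ into the running-time bound $O(G+g)$ then yields a total of $O(G + Gn^3)$.

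The only step that remains is the asymptotic simplification $O(G + Gn^3) = O(Gn^3)$, which holds because $G \le Gn^3$ for any nonempty input (that is, $n \ge 1$), so the additive $G$ term is dominated and may be absorbed. Since both ingredients are theorems proved above, I do not anticipate any real obstacle; this is essentially a one-line proof. The only thing worth verifying is precisely that compositional bookkeeping, that $g$ is measured identically in \prettyref{thm:parse-running-time} and \prettyref{thm:number-of-nodes-created} and that the naming scheme assigns exactly one name per constructed node, so that the count of names truly counts nodes.
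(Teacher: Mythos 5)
Your proposal is correct and matches the paper's proof exactly: the paper likewise obtains the result by substituting the $O\mathopen{}\left(Gn^{3}\right)\mathclose{}$ bound on $g$ from \prettyref{thm:number-of-nodes-created} into the $O\mathopen{}\left(G+g\right)$ bound of \prettyref{thm:parse-running-time}. Your additional remarks on absorbing the additive $G$ term and on the consistency of $g$ across the two theorems are sound but merely make explicit what the paper leaves implicit.
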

\begin{proof}
Use $O\mathopen{}\left(Gn^{3}\right)\mathopen{}$ in \prettyref{thm:number-of-nodes-created}
for $g$ in \prettyref{thm:parse-running-time}.
\end{proof}
Note that this analysis does not assume the use of the process that
\citet{Might:2011:PDF:2034773.2034801} call compaction. Nevertheless,
it does hold, in that case, if compaction rules are applied only when
a node is constructed and only locally at the node being constructed.
The extra cost of compaction is thus bounded by the number of nodes
constructed, and compaction only ever reduces the number of nodes
constructed by other parts of the parser.

\section{Improving Performance in Practice\label{sec:Practical-Performance}}

Given that PWD has a cubic running time instead of the exponential
conjectured in \citet{Might:2011:PDF:2034773.2034801}, the question
remains of why their implementation performed so poorly and whether
it can be implemented more efficiently. To investigate this, we reimplemented
PWD from scratch and built up the implementation one part at a time.
We measured the running time as each part was added and adjusted our
implementation whenever a newly added part significantly slowed down
the implementation. \prettyref{sub:Benchmarks} reports the final
performance of the resulting parser. Aside from low-level needs to
choose efficient data structures, we found three major algorithmic
improvements, which are discussed in \prettyref{sub:Computing-fixed-points},
\prettyref{sub:Compaction}, and \prettyref{sub:Hash-tables}.

The resulting parser implementation remains rather simple and easily
read. The optimization of the fixed-point computation for \code{nullable?}
(\prettyref{sub:Computing-fixed-points}) takes 24~lines of Racket
code, including all helpers. Compaction (\prettyref{sub:Compaction})
is implemented using smart constructors for each form that test if
they are constructing a form that can be reduced. This takes 50 lines
of code due to each constructor needing to have a clause for each
child constructor with which it could reduce. Finally, single-entry
memoization (\prettyref{sub:Hash-tables}) requires changing only
the helpers that implement memoization, which does not increase the
complexity or size of the resulting code. With all of these optimizations
implemented, the core code is 62~lines of Racket code with an additional
76~lines of code for helpers.

The complete implementation can be downloaded from:
\begin{lyxcode}
http://www.bitbucket.com/ucombinator/derp-3
\end{lyxcode}

\subsection{Benchmarks\label{sub:Benchmarks}}

In order to test the performance of our implementation of PWD, we
ran our parser on the files in the Python Standard Library version
3.4.3 \citep{Python343} using a grammar derived from the Python 3.4.3
specification \citep{Python343grammar}. The Python Standard Library
includes 663 Python files, which have sizes of up to 26,125 tokens.

We compared our parser against three parsers. The first one used the
original PWD implementation \citep{derp1-implementation}. The second
one used the \code{parser-tools/cfg-parser} library \citep{cfg-parser}
that comes with Racket 6.1.1 \citep{racket}. The third one used Bison
version 3.0.2 \citep{bison}.

In order to have a fair comparison against the original PWD implementation,
our parser was written in Racket. For compatibility with \code{parser-tools/cfg-parser}
and Bison, we modified our grammar to use traditional CFG productions
instead of the nested parsing expressions supported by PWD and used
by the Python grammar specification. The resulting grammar contained
722 productions.

The \code{parser-tools/cfg-parser} library uses a variant of the
\citeauthor{early-phd} parsing algorithm \citep{early-phd,Earley:1970:10.1145/362007.362035},
so it may not perform as well as other GLR parsers \citep{Lang:1974:10.1007/3-540-06841-4_65}.
Nevertheless, we used it because we were not able to locate a suitable
GLR parser for Racket.

In order to compare against a more practical GLR parser, we included
a Bison-based parser. We ran Bison in GLR mode, as the grammar resulted
in 92 shift/reduce and 4~reduce/reduce conflicts. However, as the
Bison-based parser is written in C and the improved PWD parser is
written in Racket, the Bison-based parser has an extra performance
boost that the improved PWD implementation does not have.

We ran the tests with Racket~6.1.1 and GCC~4.9.2 on a 64-bit, 2.10~GHz
Intel Core i3-2310M running Ubuntu 15.04. Programs were limited to
8000~MB of RAM via \code{ulimit}. We tokenized files in advance
and loaded those tokens into memory before benchmarking started, so
only parsing time was measured when benchmarking. For each file, we
computed the average of ten rounds of benchmarking that were run after
at least three warm-up rounds. However, the original PWD was so slow
that we could only do three rounds of benchmarking for that implementation.
Each round parsed the contents of the file multiple times, so the
run time lasted at least one second to avoid issues with clock quantization.
We cleared memoization tables before the start of each parse. A small
number of files exceeded 8000~MB of RAM when parsed by the original
PWD or \code{parser-tools/cfg-parser} and were terminated early.
We omit the results for those parsers with those files. This did not
happen with the improved PWD and Bison, and the results from those
parsers on those files are included. The final results are presented
in \prettyref{fig:Performance-of-parsers} and are normalized to measure
parse time per input token.

As reported in \citet{Might:2011:PDF:2034773.2034801}, PWD appears
to run in linear time, in practice, with a constant time per token.
However, our improved parser runs on average 951~times faster than
that by \citet{Might:2011:PDF:2034773.2034801}. It even runs 64.6~times
faster than the parser that uses the \code{parser-tools/cfg-parser}
library. As expected, our implementation ran slower than the Bison-based
parser, but by only a factor of 25.2. This is quite good, considering
how simple our implementation is and the differences in the implementations'
languages. We suspect that further speedups could be achieved with
a more efficient implementation language.

In the remainder of this section, we explain the main high-level algorithmic
techniques we discovered that achieve this performance.

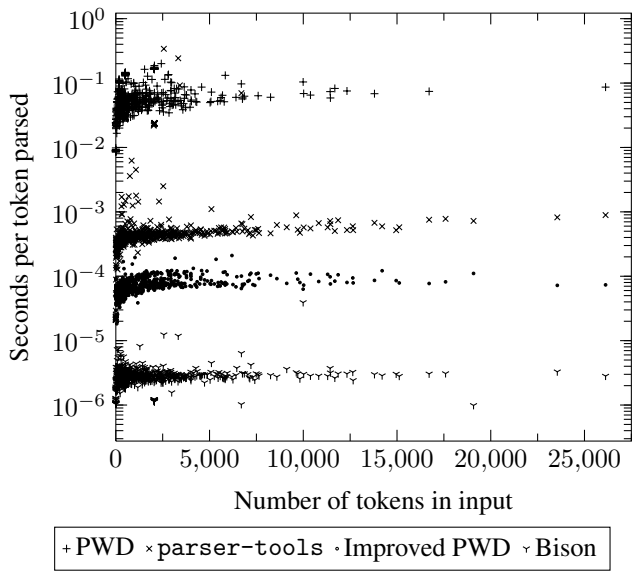
\begin{figure}[tb]
\noindent \begin{centering}
\noindent \begin{center}
\begin{tikzpicture} 
\begin{axis}[scaled ticks=false, enlarge x limits=false, xmax=27500, ymode=log,
 every tick/.style=black, minor x tick num=1, ytickten={-20,...,20},
 xlabel={Number of tokens in input},
 ylabel={Seconds per token parsed},
 legend entries={PWD\phantom{.},\code{parser-tools}\phantom{.}, Improved PWD\phantom{.}, Bison},
 legend cell align=left,
 legend columns=4,
 legend style={at={(0.42,-0.2)},anchor=north}]
\addplot[only marks, mark size=1.5pt, mark=+] table [
 x expr=\thisrowno{10},
 y expr=\thisrowno{2}/\thisrowno{10}] {data/derp1-times3.data};
\addplot[only marks, mark size=1.5pt, mark=x] table [
 x expr=\thisrowno{10},
 y expr=\thisrowno{2}/\thisrowno{10}] {data/derp2-times.data};
\addplot[only marks, mark size=0.5pt] table [
 x expr=\thisrowno{10},
 y expr=\thisrowno{2}/\thisrowno{10}] {data/derp3-times2.data};
\addplot[only marks, mark size=1.5pt, mark=Mercedes star flipped] table [
 x expr=\thisrowno{4},
 y expr=\thisrowno{2}/\thisrowno{4}] {data/yacc-times.data};
\end{axis}
\end{tikzpicture}
\par\end{center}
\par\end{centering}

\caption{\label{fig:Performance-of-parsers}Performance of various parsers}
\end{figure}

\subsection{Computing Fixed Points\label{sub:Computing-fixed-points}}

The \code{nullable?} function is defined in terms of a least fixed
point. The implementation in \citet{Might:2011:PDF:2034773.2034801}
computes this by repeatedly traversing over all grammar nodes. If
the computed nullability of any node changes during that traversal,
all of the nodes are traversed again. This continues until there are
no more changes.

This is a fairly naive method of computing a fixed point and is quadratic
in the number of nodes in the grammar as each re-traversal may update
only one node that then triggers another re-traversal. A more efficient
method uses ideas from data-flow analysis \citep{Kildall:1973:UAG:512927.512945}
and tracks which nodes depend on which others. When the computed nullability
of a node changes, only those nodes that depend on that node are revisited.

While the tracking of dependencies does incur an overhead, we can
minimize this by tracking dependencies only after discovering cycles
that prevent us from immediately computing the result. In other cases,
we directly compute nullability with a simple recursive traversal.

We can further improve the performance of \code{nullable?} by distinguishing
between nodes that are definitely not nullable and those that are
merely assumed to be not nullable because the fixed point has not
yet shown them to be nullable.

Assumed-not-nullable and definitely-not-nullable nodes behave almost
exactly alike except that we may re-traverse assumed-not-nullable
nodes but never re-traverse definitely-not-nullable nodes. This is
because definitely-not-nullable nodes have their final value, while
assumed-not-nullable nodes might not.

In many types of fixed-point problems, this is not an important distinction
because there is usually no way to distinguish between these types
of nodes. However, when computing nullability, we can take advantage
of this because the computation of nullability is not done only once.
Rather, it is called multiple times on different nodes by different
executions of \code{derive}. Within each of these fixed points, only
nodes reachable from the node passed to the initial call to \code{nullable?}
by \code{derive} have their nullability computed. Later calls to
\code{nullable?} may examine different nodes, but when they examine
nodes already examined in a previous call to \code{nullable?} from
\code{derive}, they can reuse information from that previous call.
Specifically, not only are nodes that are discovered by previous fixed
points to be nullable still nullable, but nodes that are assumed-not-nullable
at the end of a previous fixed-point calculation are now definitely-not-nullable.
This is because the nodes that could cause them to be nullable are
already at a value that is a fixed point and will not change due to
further fixed-point calculations.

We take advantage of this by marking nodes visited by \code{nullable?}
with a label that is unique to the call in \code{derive} that started
the nullability computation. Then, any nodes still assumed-not-nullable
that are marked with a label from a previous call are treated as definitely-not-nullable.

The end result of these optimizations is a significant reduction in
the number of calls to \code{nullable?}. In \prettyref{fig:calls-to-nullable},
we plot the number of calls to \code{nullable?} in our implementation
relative to that of \citet{Might:2011:PDF:2034773.2034801}. On average,
the new implementation has only 1.5\% of the calls to \code{nullable}
as that of \citet{Might:2011:PDF:2034773.2034801}.

\subsection{Compaction\label{sub:Compaction}}

\citet{Might:2011:PDF:2034773.2034801} report that a process that
they call compaction improves the performance of parsing by a factor
of about 90. We found similar results in our implementation, and the
benchmarks in \prettyref{fig:Performance-of-parsers} use compaction.
However, we also discovered improvements to this process.

First, we keep the following reduction rules from \citet{Might:2011:PDF:2034773.2034801}
with no changes. The first three rules take advantage of the fact
that $\emptyset$ is the identity of $\cup$ and the annihilator of
$\circ$. The last three rules move the operations involved in producing
an AST out of the way to expose the underlying grammar nodes. 
\[
\begin{alignedat}{1}\emptyset\cup p & \Rightarrow p\\
p\cup\emptyset & \Rightarrow p\\
\emptyset\circ p & \Rightarrow\emptyset\\
\epsilon_{s}\circ p & \Rightarrow p\rightarrow\lambda u.\left(s,u\right)\\
\epsilon_{s}\rightarrow f & \Rightarrow\epsilon_{\left(f\,s\right)}\\
\left(p\rightarrow f\right)\rightarrow g & \Rightarrow p\rightarrow\left(g\circ f\right)
\end{alignedat}
\]

To these rules, we add the following reductions, which were overlooked
in \citet{Might:2011:PDF:2034773.2034801}.
\[
\begin{alignedat}{1}\emptyset\rightarrow f & \Rightarrow\emptyset\\
\epsilon_{s_{1}}\cup\epsilon_{s_{2}} & \Rightarrow\epsilon_{s_{1}\cup s_{2}}
\end{alignedat}
\]
We also omit the following reduction used by \citet{Might:2011:PDF:2034773.2034801},
as it is covered by the reductions for $\epsilon_{s}\circ p$ and
$\left(p\rightarrow f\right)\rightarrow g$.
\[
\begin{alignedat}{1}\left(\epsilon_{s}\circ p\right)\rightarrow f & \Rightarrow p\rightarrow\end{alignedat}
\lambda u.f\,\left(u,s\right)
\]

The reader may notice that these laws are very similar to the laws
for Kleene algebras \citep{Kozen:1994:10.1006/inco.1994.1037}. If
we ignore the generated parse trees and consider only string recognition,
parsing expressions are Kleene algebras. The identities for compaction
have one subtle difference from those for Kleene algebras, however.
They must preserve the structure of the resulting parse tree, and
several of the identities insert reductions ($\rightarrow$) to do
this.

\begin{figure}[tb]
\noindent \begin{centering}
\noindent \begin{center}
\begin{tikzpicture}
\begin{axis}[scaled ticks=false, enlarge x limits=false, xmax=27500, ymin=0, ymax=0.1, 
 xlabel={Number of tokens in input},
 ylabel={Relative number of calls to \code{nullable?}},
 every tick/.style=black, minor x tick num=1, 
 ytick={0.0,0.02,0.04,0.06,0.08,0.10},
 minor y tick num=1,
 yticklabel={\pgfmathparse{\tick*100}\pgfmathprintnumber{\pgfmathresult}\%}
]
\addplot[only marks, mark size=0.5pt] table [
 x expr=\thisrowno{0},
 y expr=(\thisrowno{7}-\thisrowno{11}+\thisrowno{9}-(10592-4526+4563))/\thisrowno{5}] {data/nullable2.data};
\end{axis}
\end{tikzpicture}
\par\end{center}
\par\end{centering}

\caption{\label{fig:calls-to-nullable}Number of calls to \protect\code{nullable?}
in the improved PWD relative to the original PWD}
\end{figure}
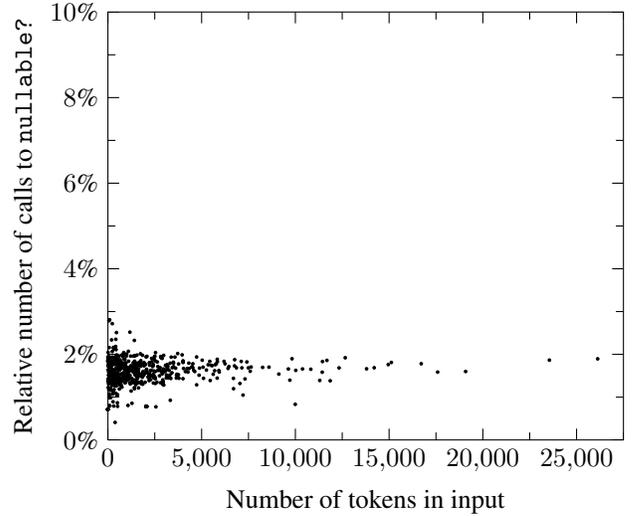

\subsubsection{Right-hand Children of Sequence Nodes\label{sub:Right-hand-children-of-seq}}

In our implementation, the following two reductions, which are used
by \citet{Might:2011:PDF:2034773.2034801}, are not used during parsing.
We omit these reductions because the forms on their left-hand sides
cannot occur during parsing unless the initial grammar contains them.
\begin{alignat*}{1}
p\circ\epsilon_{s} & \Rightarrow p\rightarrow\lambda u.\left(u,s\right)\\
p\circ\emptyset & \Rightarrow\emptyset
\end{alignat*}

\begin{thm}
While parsing, grammar nodes are never of the form $p\circ\epsilon_{s}$
or $p\circ\emptyset$ unless nodes in the initial grammar are of that
same form.\end{thm}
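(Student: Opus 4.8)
The plan is to establish a stronger invariant by induction on the derivative steps taken during parsing: \emph{every} $\circ$ node constructed while parsing has a right child that is identically the right child of some $\circ$ node already present in the initial grammar. Granting this invariant, the theorem follows immediately in contrapositive form — if no $\circ$ node in the initial grammar has a right child of the form $\epsilon_s$ or $\emptyset$, then the right child of every $\circ$ node arising during parsing lies among those (finitely many) initial right children, and hence none of them can be $\epsilon_s$ or $\emptyset$ either.

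The engine of the argument is a single inspection of the derivative rules in \prettyref{fig:Derivatives-of-parsing-expression-forms}. Among all of those clauses, the \emph{only} one that ever constructs a $\circ$ node is the clause for $D_c(L_1 \circ L_2)$, and in both the non-nullable case ($D_c(L_1) \circ L_2$) and the nullable case ($(D_c(L_1) \circ L_2) \cup D_c(L_2)$), the $\circ$ node it builds is $D_c(L_1) \circ L_2$. The crucial point is that its right child is the \emph{same shared node} $L_2$ as the right child of the input node $L_1 \circ L_2$; \code{derive} never recurses into, rebuilds, or replaces the right child of a $\circ$ node. So whenever \code{derive} produces a new $\circ$ node, that node's right child coincides with the right child of the $\circ$ node it was called on. The input $\circ$ node is either in the initial grammar (base case) or was itself produced by an earlier call to \code{derive}, in which case the inductive hypothesis already places its right child among the right children of initial-grammar $\circ$ nodes. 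Either way the invariant is preserved, and the $D_c(L_2)$ subterm in the nullable case contributes only $\circ$ nodes that inherit their right children from the structure of $L_2$, which the induction also covers.

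The one place that requires genuine care — and which I expect to be the main obstacle — is the interaction with compaction, since the theorem exists precisely to justify dropping the $p \circ \epsilon_s$ and $p \circ \emptyset$ reductions. Here I would verify that none of the \emph{retained} compaction rules ever constructs a $\circ$ node: the rules whose left-hand sides match a $\circ$ node, namely $\emptyset \circ p \Rightarrow \emptyset$ and $\epsilon_s \circ p \Rightarrow p \rightarrow \lambda u.(s,u)$, only \emph{eliminate} the $\circ$ node, and the $g \circ f$ in $(p \rightarrow f) \rightarrow g \Rightarrow p \rightarrow (g \circ f)$ is function composition on reduction functions, not a grammar concatenation. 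Because compaction is applied only locally and only at the moment a node is constructed, it likewise never mutates the right child of a surviving $\circ$ node after the fact — each right child is a fixed, shared node whose form is settled when it is first built. Consequently compaction introduces no new right children of $\circ$ nodes, and the inductive invariant derived from the derivative rules alone continues to hold, completing the argument.
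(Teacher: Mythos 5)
Your proof is correct and rests on exactly the same observation as the paper's own two-sentence argument, namely that the derivative rules rebuild only the left-hand child of a sequence node, so the right-hand child of every $\circ$ node arising during parsing is a shared copy of a right-hand child from the initial grammar. Your additional audit of the compaction rules is extra diligence the paper omits; the only caveat is that the associativity rule introduced later for canonicalizing chains of sequence nodes, $\left(p_{1}\circ p_{2}\right)\circ p_{3}\Rightarrow\left(p_{1}\circ\left(p_{2}\circ p_{3}\right)\right)\rightarrow f$, does construct a $\circ$ node with a \emph{fresh} right child $p_{2}\circ p_{3}$, which violates your stronger invariant as literally stated (and the paper's proof sentence too) but not the theorem itself, since that fresh right child is a $\circ$ node and so can never be of the form $\epsilon_{s}$ or $\emptyset$.
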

\begin{proof}
The derivative process changes only the left-hand child of a sequence
node. Thus, the right-hand child of a sequence node is always a copy
of the right-hand child of a sequence node from the initial grammar.
\end{proof}
We take advantage of this fact by using these reduction rules on the
initial grammar before parsing so that once parsing starts, we never
need to check for them again. This avoids the need to inspect the
right-hand children of sequence nodes during parsing and saves us
the cost of any resulting memory accesses or conditional branching.

\subsubsection{Canonicalizing Chains of Sequence Nodes}

Consider a grammar fragment, like in \prettyref{fig:seq-stack-left},
where $p_{1}$ is not nullable. When taking the derivative, only the
left-hand children of the sequence nodes are considered. Thus, none
of $p_{2},\cdots,p_{i-1},p_{i}$ are inspected by \code{derive},
though the sequence nodes containing them are traversed. We could
avoid the cost of this traversal if we restructured the grammar like
in \prettyref{fig:seq-stack-right} where the $f'$ function rearranges
the pairs in the resulting parse tree to match the AST produced by
\prettyref{fig:seq-stack-left}. As a result, \code{derive} would
traverse only two nodes, the reduction node and topmost sequence node,
instead of the $i$ nodes in \prettyref{fig:seq-stack-left}.

We can use compaction to try to optimize \prettyref{fig:seq-stack-left}
into \prettyref{fig:seq-stack-right} by adding the following reduction
rule, which implements associativity for sequence nodes.
\begin{multline*}
\left(p_{1}\circ p_{2}\right)\circ p_{3}\Rightarrow\left(p_{1}\circ\left(p_{2}\circ p_{3}\right)\right)\\
\rightarrow\lambda u.\left\{ \left(\left(t_{1},t_{2}\right),t_{3}\right)\mid\left(t_{1},\left(t_{2},t_{3}\right)\right)\in u\right\} 
\end{multline*}
However, this is not enough on its own. Depending on the order in
which nodes get optimized by this reduction rule, a reduction node
may be placed between neighboring sequence nodes that interferes with
further applications of this reduction rule. This can lead to structures
like in \prettyref{fig:seq-red-stack-left}. Indeed, our inspection
of intermediate grammars during parses revealed several examples of
this.

\begin{figure}[tb]
\hfill{}\subfloat[\label{fig:seq-stack-left}Left-associated sequence nodes]{\begin{tikzpicture}
\tikzset{level distance=3em}
\Tree [.$\circ$ [.$\circ$ \edge[draw=none] node {$\iddots$}; [.$\circ$ $p_1$ $p_2$ ] $p_{i-1}$ ] $p_i$ ]
\end{tikzpicture}

}\hfill{}\subfloat[\label{fig:seq-stack-right}Right-associated sequence nodes]{\begin{tikzpicture}
\tikzset{level distance=3em}
\Tree [.$\rightarrow f'$ [.$\circ$ $p_1$ [.$\circ$ $p_2$ \edge[draw=none] node {$\ddots$}; [.$\circ$ $p_{i-1}$ $p_i$ ] ] ] ]
\end{tikzpicture}

}\hfill{}

\caption{Examples of stacked sequence nodes}

\hfill{}\subfloat[\label{fig:seq-red-stack-left}Reductions mixed into sequence nodes
that prevent optimization]{\begin{tikzpicture}
\tikzset{level distance=3em}
\Tree [.$\rightarrow f_i$
 [.$\circ$
  [.$\rightarrow f_{i-1}$
   [.$\circ$
    \edge[draw=none] node {$\iddots$};
    [.$\rightarrow f_2$
     [.$\circ$
      [.$\rightarrow f_1$ $p_1$ ]
      $p_2$
     ]
    ]
    $p_{i-1}$
   ]
  ]
  $p_i$
 ]
]
\end{tikzpicture}

}\hfill{}\subfloat[\label{fig:seq-red-stack-right}Sequence nodes after moving the reductions]{\begin{tikzpicture}
\tikzset{level distance=3em}
\Tree [.$\rightarrow f_i'$
 [.$\rightarrow f_{i-1}'$ \edge[draw=none] node {$\vdots$};
 [.$\rightarrow f_2'$ [.$\rightarrow f_1'$
 [.$\circ$ [.$\circ$ \edge[draw=none] node {$\iddots$};
[.$\circ$ $p_1$ $p_2$ ] $p_{i-1}$ ] $p_i$ ] ] ] ] ]
\end{tikzpicture}

}\hfill{}

\caption{Examples of reductions mixed with sequence nodes}
\end{figure}

We resolve this by also adding the following rule that floats reduction
nodes above and out of the way of sequence nodes.
\begin{multline*}
\left(p_{1}\rightarrow f\right)\circ p_{2}\Rightarrow\\
\left(p_{1}\circ p_{2}\right)\rightarrow\lambda u.\left\{ \left(f\,\left\{ t_{1}\right\} ,t_{2}\right)\mid\left(t_{1},t_{2}\right)\in u\right\} 
\end{multline*}

If we apply this rule for $\left(p_{1}\rightarrow f\right)\circ p_{2}$
to the reduction nodes generated by applying the rule for $\left(p_{1}\circ p_{2}\right)\circ p_{3}$,
then we get \prettyref{fig:seq-red-stack-right} where each $f_{i}'$
does the work of $f_{i}$ at the appropriate point in the AST. If
we further use the rule for $\left(p\rightarrow f\right)\rightarrow g$
on the stack of reduction nodes in \prettyref{fig:seq-red-stack-right},
we get \prettyref{fig:seq-stack-right}, which allows derivatives
to be computed efficiently.

Note that there is also a version of this reduction rule for when
a reduction node is the right-hand instead of left-hand child of a
sequence. It is the following.
\begin{multline*}
p_{1}\circ\left(p_{2}\rightarrow f\right)\Rightarrow\\
\left(p_{1}\circ p_{2}\right)\rightarrow\lambda u.\left\{ \left(t_{1},f\,\left\{ t_{2}\right\} \right)\mid\left(t_{1},t_{2}\right)\in u\right\} 
\end{multline*}
 However, for the same reasons as in \prettyref{sub:Right-hand-children-of-seq},
we use this only on the initial grammar and not during parsing.

\subsubsection{Avoiding Separate Passes}

\citet{Might:2011:PDF:2034773.2034801} implement compaction as a
separate pass in between the calls to \code{derive} for successive
tokens. However, this means that nodes are traversed twice per token
instead of only once. To avoid this overhead, we immediately compact
nodes as they are constructed by \code{derive}. This results in two
complications.

The first complication is that we do not want to iterate these reductions
to reach a fixed point. We just do the reductions locally on the grammar
node being generated by \code{derive}. As a result, there may be
a few missed opportunities for applying reductions, but compactions
in later derivatives should handle these.

The second complication is that we must consider how to compact when
\code{derive} has followed a cycle in the grammar. The \code{derive}
function usually does not need to know anything about the derivatives
of the child nodes, which means that calculating these derivatives
can be deferred using the lazy techniques described in \prettyref{sub:Derivatives-of-context-free}.
This poses a problem with compaction though, as many of the rules
require knowing the structure of the child nodes. Like with the first
complication, we have \code{derive} punt on this issue. If inspecting
a child would result in a cycle, \code{derive} does not attempt to
compact. This design may miss opportunities to compact, but it allows
us to avoid the cost of a double traversal of the grammar nodes.

\subsection{Hash Tables and Memoization\label{sub:Hash-tables}}

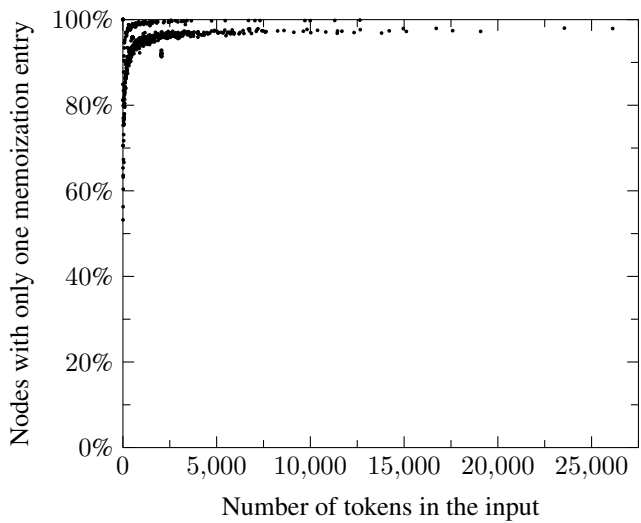
\begin{figure}[tb]
\noindent \begin{centering}
\noindent \begin{center}
\begin{tikzpicture}
\begin{axis}[
 enlarge x limits=false, xmax=27500,
 enlarge y limits=false, ymax=1, 
 xlabel={Number of tokens in the input},
 every tick/.style=black, minor x tick num=1,
 ylabel={Nodes with only one memoization entry}, minor y tick num=1,
 scaled ticks=false,
 ymin=0, yticklabel={\pgfmathparse{\tick*100}\pgfmathprintnumber{\pgfmathresult}\%},
 legend pos=south east, legend cell align=left]
\addplot[only marks, mark size=0.5pt] table [
 x expr=\thisrowno{8},
 y expr=(\thisrowno{7})/(\thisrowno{7}+\thisrowno{1})] {data/cache-sizes3.data};
\end{axis}
\end{tikzpicture}
\par\end{center}
\par\end{centering}

\caption{\label{fig:nodes-with-one-cache-entry}Percentage of nodes with only
one memoization entry for \protect\code{derive}}
\end{figure}

The implementation in \citet{Might:2011:PDF:2034773.2034801} uses
hash tables to memoize \code{nullable?}, \code{derive}, and \code{parse-null}.
Function arguments are looked up in those hash tables to see if a
result has already been computed and, if so, what that result is.
Unfortunately, hash tables can be slow relative to other operations.
For example, in simple micro-benchmarks we found that that Racket's
implementation of hash tables can be up to 30 times  slower than
field access. Since memoization-table access is so central to the
memoization process, we want to avoid this overhead. We do so by storing
memoized results as fields in the nodes for which they apply instead
of in hash tables mapping nodes to memoized results.

This technique works for \code{nullable?} and \code{parse-null},
but \code{derive} has a complication. The \code{derive} function
is memoized over not only the input grammar node but also the token
by which that node is being derived. Thus, for each grammar node,
there may be multiple memoized results for multiple different tokens.
The implementation used by \citet{Might:2011:PDF:2034773.2034801}
handles this using nested hash tables. The outer hash table maps grammar
nodes to inner hash tables that then map tokens to the memoized result
of \code{derive}. While we can eliminate the outer hash table by
storing the inner hash tables for \code{derive} in a field in individual
grammar nodes, the central importance of \code{derive} makes eliminating
both sorts of hash table desirable.

These inner hash tables are usually small and often contain only a
single entry. \prettyref{fig:nodes-with-one-cache-entry} shows the
percentage of inner hash tables in the original PWD implementation
that have only a single entry when parsing files from the Python Standard
Library. Though we note the grouping into two populations, what interests
us is that so many have only one entry. We can optimize for the single-entry
case by adding two fields to each grammar node that behave like the
key and value of a hash table that can store only one entry, and when
a second entry is added, evicts the old entry.

This makes our memoization forgetful, and it may fail to notice when
a token is reused multiple times in the input. However, the complexity
results in \prettyref{sec:Complexity-Analysis} still hold, as they
already assume every token is unique. Cycles in the grammar still
require that we not forget the memoizations of \code{derive} on the
current input token, but that requires only the single entry we store
in each node.

\begin{figure}[tb]
\noindent \begin{centering}
\noindent \begin{center}
\begin{tikzpicture}
\begin{axis}[enlarge x limits=false, xmax=27500, ymin=0, ymax=1.6, scaled ticks=false,
 xlabel={Number of tokens in the input},
 ylabel={Uncached calls to \code{derive}},
 every tick/.style=black, minor x tick num=1,
 ytick={0.0,0.2,0.4,0.6,0.8,1.0,1.2,1.4,1.6}, minor y tick num=1,
 yticklabel={\pgfmathparse{\tick*100}\pgfmathprintnumber{\pgfmathresult}\%}]
\addplot[only marks, mark size=0.5pt] table [
 x expr=\thisrowno{0},
 y expr=(\thisrowno{4}+\thisrowno{6})/\thisrowno{9}] {data/uncached2.data};
\addplot[domain=0:30000,help lines] {1};
\end{axis}
\end{tikzpicture}
\par\end{center}
\par\end{centering}

\caption{\label{fig:uncached-calls-to-derive}Percentage of uncached calls
to \protect\code{derive} with single entry versus full hash table}
\end{figure}
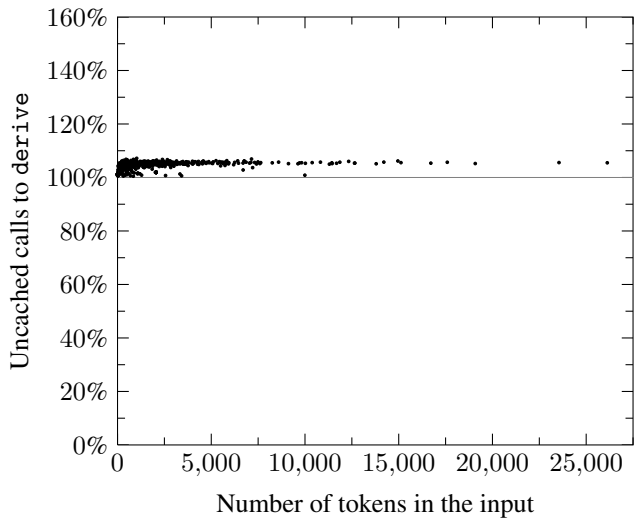

We discovered that, in practice, the number of extra calls to \code{derive}
that are recomputed as a result of this is relatively small. \prettyref{fig:uncached-calls-to-derive}
shows the number of calls to \code{derive} in our implementation
when using the single-entry technique relative to the number when
using full hash tables. While there are more uncached calls when using
the single-entry technique, the increase is on average only 4.2\%
and never more than 4.8\%. We also experimented with larger caches
(e.g., double- or triple-entry caches) to see if the extra cache hits
outweighed the extra computation cost. Early results were not promising,
however, so we abandoned them in favor of a single-entry cache.

\begin{figure}[tb]
\noindent \begin{centering}
\noindent \begin{center}
\begin{tikzpicture}
\begin{axis}[scaled ticks=false, enlarge x limits=false, xmax=27500, 
 every tick/.style=black, minor x tick num=1, minor y tick num=1, ytick={-10,...,10}, 
 ymin=0, ymax=7,
 xlabel={Number of tokens in input},
 ylabel={Speedup factor},
 legend cell align=left,
 legend columns=4, legend pos={south east}]
\addplot[only marks, mark size=0.5pt] table [
 x expr=\thisrowno{22},
 y expr=\thisrowno{14}/\thisrowno{3}] {data/derp3-times2-hash-rel.data};
\addplot[domain=0:30000,help lines] {1};
\end{axis}
\end{tikzpicture}
\par\end{center}
\par\end{centering}

\caption{\label{fig:single-entry-speed-up}Performance speedup of single entry
over full hash tables}
\end{figure}
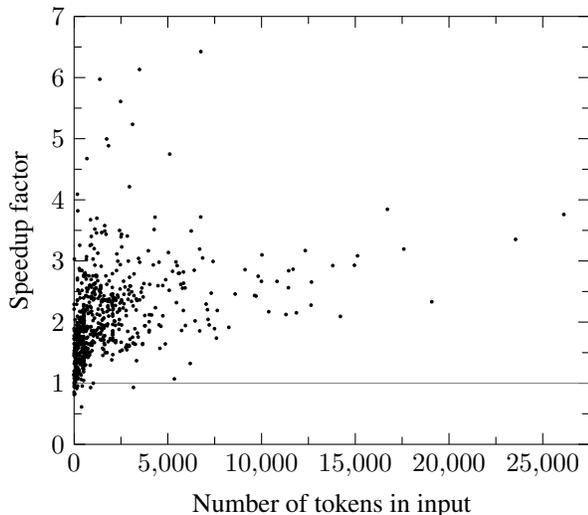

We measured the performance impact of this by running our implementation
both with the single-entry technique and with full hash tables. The
relative speedup of using the single-entry technique is shown in \prettyref{fig:single-entry-speed-up}.
The extra calls to \code{derive} partially cancel out the performance
improvements from avoiding the inner hash tables, but on average the
performance still speeds up by a factor of 2.04.

\section{Conclusion}

In this paper, we have discovered that the believed poor performance
of PWD both in theory and practice is not inherent to PWD. Rather,
its worst-case performance at $O\mathopen{}\left(n^{3}\right)\mathclose{}$
is comparable to other full CFG parsers. Furthermore, with only a
few algorithmic tweaks, the unacceptably slow performance of the implementation
in \citet{Might:2011:PDF:2034773.2034801} can be sped up by a factor
of 951 to be on par with other parsing frameworks.

\acks{This material is partially based on research sponsored by DARPA under
agreements number AFRL FA8750-15-2-0092 and FA8750-12-2-0106 and by
NSF under CAREER grant 1350344. The U.S. Government is authorized
to reproduce and distribute reprints for Governmental purposes notwithstanding
any copyright notation thereon.\vspace{0.15em}
}

\appendix

\balance\bibliographystyle{plainnat}
\phantomsection\addcontentsline{toc}{section}{\refname}\bibliography{derivatives2}

\end{document}